\def\ba{\begin{array}}
\def\ea{\end{array}}
\def\baa{\begin{align}}
\def\eaa{\end{align}}
\newcommand{\beq}{\begin{equation}}
\newcommand{\eeq}{\end{equation}}
\newcommand{\bq}{\begin{eqnarray}}
\newcommand{\eq}{\end{eqnarray}}
\newcommand{\bqn}{\begin{eqnarray*}}
\newcommand{\eqn}{\end{eqnarray*}}
\newcommand{\bee}{\begin{enumerate}}
\newcommand{\eee}{\end{enumerate}}
\newcommand{\bi}{\begin{itemize}}
\newcommand{\ei}{\end{itemize}}
\newcommand{\diag}{\mathrm{diag}}
\newcommand{\wang}[1]{\ifthenelse{\boolean{showcomments}}
{ \textcolor[rgb]{1,0,1}{(ZW:  #1)}}{}}
\newcommand{\fliu}[1]{\ifthenelse{\boolean{showcomments}}
{ \textcolor{red}{(FL:  #1)}}{}}
\newcommand{\zhao}[1]{\ifthenelse{\boolean{showcomments}}
{ \textcolor{green}{(CZ:  #1)}}{}}
\newcommand{\slow}[1]{\ifthenelse{\boolean{showcomments}}
{ \textcolor{blue}{(SL:  #1)}}{}}
\theoremstyle{definition}
\newtheorem{theorem}{Theorem}
\newtheorem{lemma}[theorem]{Lemma}
\theoremstyle{definition}
\newtheorem{definition}{Definition}
\newtheorem{remark}{Remark}
\let\MYoriglatexcaption\caption
\renewcommand{\caption}[2][\relax]{\MYoriglatexcaption[#2]{#2}}
\title{Distributed Frequency Control with Operational Constraints, Part II: Network Power Balance}
        \author{Zhaojian~Wang, Feng~Liu, Steven~H.~Low,~\IEEEmembership{Fellow,~IEEE},
                Changhong~Zhao, and~Shengwei~Mei~\IEEEmembership{Fellow,~IEEE}
                \thanks{This work was supported  by the National Natural Science Foundation
                        of China ( No. 51377092, No. 51677100, No. 51621065), Foundation of Chinese Scholarship Council (CSC No. 201506215034), 
the US National Science Foundation through awards EPCN 1619352, CCF 1637598, 
CNS 1545096, ARPA-E award DE-AR0000699, and Skoltech through Collaboration
                        Agreement 1075-MRA.     }       
                \thanks{Z. Wang, F. Liu and S. Mei are with the Department
                        of Electrical Engineering, Tsinghua University, Beijing,
                        China, 100084 e-mail: (lfeng@tsinghua.edu.cn).}
                \thanks{S. H. Low and C. Zhao  are with the Department
                        of Electrical Engineering, California Institute of Technology, Pasadena, CA, USA, 91105 e-mail:(slow@caltech.edu)}
        }
\begin{document}

         \maketitle

\begin{abstract}                                                
    In Part I of this paper we propose a decentralized optimal frequency control of multi-area power system with operational constraints, where the tie-line powers  remain unchanged in the steady state and the power mismatch is balanced within individual control areas. In Part II of the paper, we propose a distributed controller for optimal frequency control in the network power balance case, where the power mismatch is balanced over the whole system. With the proposed controller,
	the tie-line powers remain within the acceptable range at equilibrium, while the regulation capacity constraints are satisfied both at equilibrium and during transient. It is revealed that the closed-loop system with the proposed controller carries out primal-dual updates with saturation for solving an associated optimization problem. To cope with discontinuous dynamics of the closed-loop system, we deploy the invariance principle for nonpathological Lyapunov function to prove its asymptotic stability. Simulation results are provided to show the effectiveness of our controller.
\end{abstract}
                
\begin{IEEEkeywords}
    Power system dynamics, frequency control; network power balance; distributed control.
\end{IEEEkeywords}

\IEEEpeerreviewmaketitle

\section{Introduction}
In Part I of the paper we have investigated the optimal frequency control of multi-area power system with operational constraints \cite{Wang:DistributedFrequency}. In that case, the tie-line powers are required to be unchanged in the steady state after load disturbances, which implies the power mismatch in each area has to be balanced individually. It is referred to as the per-node power balance case. In Part II of the paper, we consider the transmission congestion in the distributed optimal frequency design. 

The per-node balance case in Part I mainly considers the situation where the power delivered from one area to another is fixed, e.g. contract power, which should not be violated in normal operation. However, in some circumstances, control areas may cooperate for better frequency recovery or regulation cost reduction. In this case, power mismatch may be  balanced by all generations and controllable loads among all control areas in cooperation. Similar situations also appear in one control area with multiple generators and controllable loads that cooperate to eliminate the power mismatch in the area. It is referred to as the network power balance case. Compared with the per-node balance case, the most challenging problem in this case is that the tie-line powers may change and congestions may occur. In addition, local information is not sufficient and neighboring information turns to be helpful. As for the constraints, the tie-line power constraints are not hard limits, which only need to be satisfied at equilibrium. The capacity limits on the generations and  controllable loads also required to be satisfied both in  steady state and during transient.

In the recent  literature of frequency control, tie-line power constraints are considered in \cite{Mallada-2017-OLC-TAC, Stegink:aunifying, Stegink:aport,Yi:Distributed, zhao:aunified, zhang:real}. In \cite{Mallada-2017-OLC-TAC},  tie-line power constraints are included in the load-side secondary frequency control. A virtual variable is used to estimate the tie-line power, whose value is identical to the tie-line power at equilibrium. In \cite{Stegink:aunifying, Stegink:aport}, an optimal economic dispatch problem including  tie-line power constraints is formulated, then the solution dynamics derived from a primal-dual algorithm is shaped as a port-Hamiltonian form. The power system dynamics also have a port-Hamiltonian form, which are interconnected with the solution dynamics to constitute a closed-loop Hamiltonian system. Then, the optimality and stability are proved. In \cite{zhao:aunified}, a unified method is proposed for  primary and secondary frequency control, where the congestion management is implemented in the secondary control. In \cite{zhang:real}, a real-time control framework is proposed for tree power networks, where transmission capacities are considered. 


%
 
Similar to the per-node balance case, hard limits, such as capacity constraints of power injections on buses, are enforced only in the steady-state in the literature, which may fail if such constraints are violated in transient. Here we construct a fully distributed control to recover nominal frequency while eliminating congestion. Differing from the literature, it enforces regulation capacity constraints not only at equilibrium, but also during transient. We show that the controllers together with the physical dynamics serve as  primal-dual updates with saturation for solving the optimization problem. 
The optimal solution of the optimization problem and the equilibrium point of closed-loop system are identical. 

The  enforcement of capacity constraints during transient and tie-line power limits simultaneously makes the stability proof difficult. Specifically, the Lyapunov function  is not continuous anymore, as in the per-node case in Part I of the paper. In this situation, the conventional LaSalle's invariance principle does not apply. To overcome the difficulty, we construct a nonpathological Lyapunov function to mitigate the impacts of nonsmooth dynamics. The salient features of the controller are: 
\bee
\item \emph{Control goals:} the controller restores the nominal frequency and balance the power mismatch in the whole system after unknown load disturbance while minimizing the regulation costs; 

\item \emph{Constraints:} the regulation capacity constraints are always enforced even during transient and the congestions can be eliminated automatically; 

\item \emph{Communication:} only neighborhood communication is needed in the  network balance case;

%
%

\item \emph{Measurement:} the controller is adaptive to unknown load disturbances automatically with no need of load measurement. 
\eee

The rest of this paper is organized as follows. In Section II, we describe our model. Section III formulates the optimal frequency control problem in the network balance case, presents the distributed  frequency controller and proves  the optimality, uniqueness and stability of the closed-loop equilibrium. Simulation results are given in Section IV. Section V concludes the paper.

\section{Network model}


We summarize the notation used in Part I \cite{Wang:DistributedFrequency}. The power network is model by a directed graph ${G}:=(N, E)$ 
where  $N=\{0,1,2,...n\}$ is the set of nodes (control areas) and
$E\subseteq N\times N$ is the set of edges (tie lines).  If a pair of
nodes $i$ and $j$ are connected by a tie line
directly, we denote the tie line by $(i,j)\in E$.
Let $m:= |E|$ denote the number of tie lines.
Use $(i,j)\in E$
or $i\rightarrow j$ interchangeably to denote a directed edge from $i$ to $j$.
Assume the graph is connected
and node $0$ is a reference node.
For each node $j\in N$, $\theta_j(t)$ denotes the rotor angle at node $j$ at time
$t$ and $\omega_j(t)$ is the  frequency.
$P_j^g(t)$ denotes the (aggregate) generation at node
$j$ at time $t$ and $u^g_j(t)$ is its generation control command.
$P^l_j(t)$ denotes the (aggregate) controllable load and $u^l_j(t)$ is its
load control command.  $p_j$ is the (aggregate) uncontrollable load. 

The power system dynamics for each node $j\in N$ is
\begin{subequations}
	\begin{align}
	\dot \theta_j & =  \omega_j(t)
	\label{eq:model.1a}
	\\
	M_j \dot \omega_j & =   P^g_j(t) - P^{l}_j(t) - p_j -D_j \omega_j(t)
	\nonumber
	\\
	&  + \sum_{i: i\rightarrow j} \! B_{ij} (\theta_i(t) - \theta_j(t))
	-  \sum_{k: j\rightarrow k} \! B_{jk}(\theta_j(t) - \theta_k(t))
	\label{eq:model.1b}
	\\
	T^g_j \dot P^g_j & =  - P^g_j(t) + u^g_j(t) - {\omega_j(t)}/{R_j}
	\label{eq:model.1c}
	\\
	T^l_j \dot P^{l}_j & =  - P^{l}_j(t) + u^l_j(t)
	\label{eq:model.1d}
	\end{align}
	where $D_j>0$ are damping constants, $R_j>0$ are droop parameters,
	and $B_{jk}>0$ are line parameters that depend on the reactance of the line $(j,k)$.
	Let  $x := (\theta, \omega, P^g, P^l)$ denote the state of the network
	and $u := (u^g, u^l)$ denote the control.
	\label{eq:model.1}
\end{subequations}

The capacity constraints are:
\begin{subequations}
        \bq
                \underline{P}^g_j & \leq \ P^g_j(t) \ \leq \overline{P}^g_j, \quad j\in N
        \label{eq:OpConstraints.1a}\\
                \underline{P}^l_j & \leq \ P^l_j(t) \ \leq \overline{P}^l_j, \quad j\in N
        \label{eq:OpConstraints.1b}
        \eq        
        Here \eqref{eq:OpConstraints.1a}  and \eqref{eq:OpConstraints.1b} are  \emph{hard limits} on the regulation capacities of generation and controllable load at each node, which should not be violated at any time even during transient.
     \label{eq:OpConstraints.1}
\end{subequations}

The system operates in a steady state initially, i.e., the generation
and the load are balanced and the frequency is at its nominal value. All variables
represent {deviations} from their nominal or scheduled values so that,
e.g., $\omega_j(t)=0$ means the frequency is at its nominal value. 

In this paper, all nodes cooperate to rebalance power over the entire network after a disturbance.  The power flows $P_{ij}$ on the tie lines may deviate from their scheduled values and we require that they satisfy line limits,
i.e., 
\bqn
\underline{P}_{ij} \ \le  \ P_{ij} \ \le \ \overline{P}_{ij} \qquad\quad \forall (i,j)\in E
\eqn
for some upper and lower bounds $\underline{P}_{ij}, \overline{P}_{ij}$. 

In DC approximation the power flow on line $(i,j)$ is given by 
$P_{ij} =B_{ij}(\theta_i - \theta_j)$.  Hence line flow constraints in the per-node
balance case is $\theta_i = \theta_j$ for all $(i,j)\in E$ and in the network balance
case is: 
\bq
\label{eq:lineConstraint}
\underline{\theta}_{ij} \ \le \ \theta_{i} - \theta_j \ \le \ \overline{\theta}_{ij} \qquad\quad \forall (i,j)\in E
\eq
where $\underline{\theta}_{ij}=\underline{P}_{ij}/B_{ij}$, $\overline{\theta}_{ij}=\overline{P}_{ij}/B_{ij}$ 
are lower and upper bounds on angle differences. 

As the generation $P^g_j$ and load $P^l_j$ in each area can increase or decrease, 
and a line flow $P_{ij}$ can in either direction, we make the following assumption.
\bi
\item[\textbf{A1:}] 
\bee
\item $\underline{P}^g_j < 0 < \overline{P}^g_j$ and $\underline{P}^l_j <0 < \overline{P}^l_j$
for $\forall j\in N$.
\item $\underline{\theta}_{ij} \leq 0 \leq \overline{\theta}_{ij}$ for $(i,j)\in E$.
\item $\theta_0(t):=0$ and $\phi_0(t):=0$ for all $t\geq 0$.
\eee
\ei  
Here $\phi$ is a vector variable that represents virtual phase angles in the
network balance case in Section \ref{sec:npb}.
The assumption $\theta_0 \equiv \phi_0 \equiv 0$ amounts to using $(\theta_0(t), \phi_0(t))$ as reference
angles.  It is made merely for notational convenience: as we will see, the equilibrium point will be
unique with this assumption (or unique up to reference angles without this assumption).

\section {Controller Design for The Network Power Balance Case}
\label{sec:npb}

In the per-node balance case, individual control areas rebalance power within their own areas
after disturbances. However, in many circumstances, it may be more efficient for all control 
areas to eliminate power imbalance of the overall system in a coordinated manner. This can 
be modeled as the condition:
\bq
\label{eq:balance.net}
\sum\nolimits_j P^g_j & = &  \sum\nolimits_j \left( P^l_j + p_j \right)
\eq
In this case the tie-line flows may not be restored to their pre-disturbance values.
To ensure that they are within operational limits, the constraints \eqref{eq:lineConstraint}
are imposed.

Even though the philosophy of the controller design as well as the proofs are similar to the per-node case, 
the details are much more complicated.   
Our presentation will however be brief where there is no confusion.



\subsection{Control goals}
In the network power balance case, the control goals are formalized as the following 
optimization problem.
\begin{subequations}
        \bq
        \text{NBO:}\min &\!\!\!\!\!\!& \sum \limits_j \frac{\alpha_j}{2} \left(P^g_j\right)^2 
        +  \sum \limits_j \frac{\beta_j}{2} \left(P^l_j\right)^2
         +  \sum \limits_j \frac{D_j}{2} \omega_j^2 +  \sum \limits_j \frac{z_j^2}{2}
          \nonumber\\     
        \label{eq:opt.2o}
        \\
        \text{over} & \!\!\!\!\!\! & x := (\theta, \phi, \omega, P^g, P^l) \text{ and }
        u := (u^g, u^l)
        \nonumber
        \\ 
        \text{s. t.}  
        & \!\!\!\!\!\! & \eqref{eq:OpConstraints.1},   
        \nonumber
        \\
        & \!\!\!\!\!\! & 
        P^g_j = P^l_j  + p_j +U_j(\theta, \omega) \quad j\in N
        \label{eq:opt.2a}
        \\
        & \!\!\!\!\!\! & 
        P^g_j = P^l_j  + p_j +\hat{U}_j(\phi) \quad j\in N
        \label{eq:opt.2b}
        \\
        & \!\!\!\!\!\! & \underline{\theta}_{ij} \le  \phi_i-\phi_j \le \overline{\theta}_{ij}
        \label{eq:opt.2c}, \quad (i,j) \in E 
        \\
        & \!\!\!\!\!\! & P^g_j \ = \ u^g_j, \quad j\in N
        \label{eq:opt.2d}
        \\
        & \!\!\!\!\!\! & P^l_j \ = \ u^l_j, \quad j\in N
        \label{eq:opt.2e}
        \eq
where $\alpha_j>0$, $\beta_j>0$ are constant weights; $z_j$ is a shorthand defined for convenience as        
\bqn
        z_j & := & P^g_j-P^l_j-p_j-\hat{U}_j(\phi)
\eqn
$U(\theta, \omega) := D\omega + CBC^T\theta$, and $\hat{U}(\phi)  :=  CBC^T \phi$.

While $\theta$ represents the phase angles in the physical power network,
$\phi$ is a cyber quantity that can be interpreted as virtual phase angles
(see remarks below).
The matrices $D$, $C$ and $B$ are defined as in the previous section.
\label{eq:opt.2}
\end{subequations}

As in the per-node  case, we define the variables $\tilde\theta_{ij}:=\theta_i - \theta_j$,
or in vector form, $\tilde\theta := C^T\theta$.
As we fix $\theta_0 := 0$ to be a reference angle under assumption A1, 
$\tilde\theta = C^T\theta$ defines a bijection between $\theta$ and $\tilde\theta$. 
Similarly we define 
$\tilde\phi_{ij}:=\phi_i - \phi_j$ or $\tilde{\phi}:=C^T\phi$, and $\phi_0 := 0$ so 
there is a bijection between $\phi$ and $\tilde\phi$.  
Note that both $\tilde{\theta}$ and $\tilde \phi$ are restricted to the column space of $C^T$. 
We will use $(\theta, \phi)$  and $(\tilde{\theta}, \tilde{\phi})$ interchangeably.
For instance we will abuse notation and write $\hat U(\phi):=CBC^T \phi$ or 
$\hat U(\tilde\phi):= CB\tilde \phi$.


We now summarize some of the interesting properties of NBO \eqref{eq:opt.2}
that will be proved formally in the next two subsections.  
We first compare NBO \eqref{eq:opt.2} with PBO in \cite{Wang:DistributedFrequency} for
the per-node balance case.
\begin{remark}[Comparison of NBO and PBO]
        \bee
        \item  Intuitively the network balance condition \eqref{eq:balance.net} is a relaxation 
        of the per-node balance condition (3) in \cite{Wang:DistributedFrequency}, and hence we
        expect that the optimal cost of NBO lower bounds that of PBO.   This is indeed the case, as we now argue.
		Constraint \eqref{eq:opt.2b} implies that any
        feasible point of \eqref{eq:opt.2} has $z_j=0$ and hence these two optimization
        problems have the same objective function.  
        Their variables and constraints 
        are different in that  PBO directly enforces the per-node balance  
        condition while NBO \eqref{eq:opt.2} has the additional 
        variable $\phi$ and constraints \eqref{eq:opt.2b}\eqref{eq:opt.2c}.
        Any optimal point $(\theta^*, \omega^*, P^{g*}, P^{l^*})$ for PBO
        however defines a feasible point $(\theta^*, \phi, \omega^*, P^{g*}, P^{l*})$ 
        for NBO \eqref{eq:opt.2} with the same cost where $\phi=\theta^*$.
        The point $(\theta^*, \phi, \omega^*, P^{g*}, P^{l*})$ satisfies 
        \eqref{eq:opt.2b}\eqref{eq:opt.2c} 
        because $(\theta^*, \omega^*, P^{g*}, P^{l^*})$ satisfies \eqref{eq:opt.2a},
        $\omega^*=0$ and $\theta^*_i = \theta^*_j$ by Theorem 2 in \cite{Wang:DistributedFrequency}, and
        $\underline{\theta}_{ij} \leq 0 \leq \overline{\theta}_{ij}$ by assumption A1.
     
        \item
        Even though any feasible point of \eqref{eq:opt.2} has $z_j=0$, the 
        objective function is augmented with $z_j^2$ to improve convergence 
        (see \cite{Feijer:Stability}).         
        \eee
\end{remark}

Even though neither the network balance condition \eqref{eq:balance.net} nor the
line limits \eqref{eq:lineConstraint} are explicitly enforced in \eqref{eq:opt.2}, 
they are satisfied at optimality (Theorem \ref{thm:6} below).
Indeed, the virtual phase angles $\phi$  
and the conditions \eqref{eq:opt.2a}--\eqref{eq:opt.2c} are carefully designed to
enforce these conditions as well as to restore the nominal frequency $\omega^*=0$
at optimality.  This technique is previously used in \cite{Mallada-2017-OLC-TAC}.
\begin{remark}[Virtual phase angles $\phi$]
\bee
        \item
        Under mild conditions, $\omega^*=0$ at optimality for both PBO and NBO.  
        For NBO, this is a consequence of the constraint \eqref{eq:opt.2b} on $\phi$; 
        see Lemma \ref{lemma:6}.
        Summing \eqref{eq:opt.2b} over 
        all $j\in N$ also implies the network  balance condition \eqref{eq:balance.net}
        since $\textbf{1}^T \hat U(\phi) = \textbf{1}^T CBC^T\phi = 0$.

        \item
        In PBO, $\theta^*_i = \theta^*_j$ at optimality (i.e., tie-line flows are restored
        $P_{ij}^*=0$) and $U(\theta^*, \omega^*)=0$.  This does not necessarily hold in NBO.  
        However $\phi$ is regarded as virtual phase angles because, at optimality, 
        $\phi^*$ differs from the real phase angles $\theta^*$ only by a constant, 
        $\phi^*-\theta^*=\textbf{1}(\phi_0-\theta_0)$ (Lemma \ref{lemma:6} in
        the appendix).
        Hence $C^T\phi^*-C^T\theta^*=C^T\cdot\textbf{1}(\phi_0-\theta_0)=0$,  implying 
        $\tilde\phi_{ij}^*=\tilde\theta_{ij}^*$. 
        Then the constraints \eqref{eq:opt.2c} are exactly the flow constraints \eqref{eq:lineConstraint}.
        In other words, we impose the flow constraints on $\tilde{\theta}$ indirectly by enforcing such
        constraints on the virtual angle  $\tilde{\phi}$.  .
\eee
\end{remark}

\subsection{Distributed controller}

Our  control laws are: 
\begin{subequations}
        \bq
        \dot \lambda_j & = &  \gamma^{\lambda}_j \left( P^g_j(t) -  P^l_j(t) - p_j-\hat{U}_j(\tilde\phi(t))\right),
        \  j\in N
\label{eq:control.2a}\\
    \dot {\eta}^+_{ij}&=&\gamma^{\eta}_{ij}[\tilde\phi_{ij}(t)-\overline{\theta}_{ij}]^+_{{\eta}^+_{ij}}\qquad\quad\ \qquad \forall (i,j)\in E
\label{eq:control.2b}\\
    \dot {\eta}^-_{ij}&=&\gamma^{\eta}_{ij}[\underline{\theta}_{ij}-\tilde\phi_{ij}(t)]^+_{{\eta}^-_{ij}}\qquad\quad\ \qquad \forall (i,j)\in E
\label{eq:control.2c}\\
        \dot{\tilde\phi}_{ij} &= & \gamma^{\tilde\phi}_{ij}\left( B_{ij}[\lambda_i(t)-\lambda_j(t)+z_i(t)-z_j(t)]\right.
        \nonumber\\
        &&\left.  \ + \ \eta^-_{ij}(t)-\eta^+_{ij}(t)\right)\qquad\qquad\ \forall (i,j)\in E
\label{eq:control.2d}\\
        u^g_j(t) & = & \left[ 
        P^g_j(t) - \gamma^g_j \left( \alpha_j P^g_j(t) + \omega_j(t)+z_j(t) + \lambda_j(t) \right) \right]_{\underline P^g_j}^{\overline P^g_j}
        \nonumber\\
        &&  \ \, + \ {\omega_j(t)}/{R_j}, \qquad\qquad\qquad\ \ \, j\in N
\label{eq:control.2e}\\
        u^l_j(t) & = & \left[ 
        P^l_j(t) - \gamma^l_j \left( \beta_j P^l_j(t) - \omega_j(t)-z_j(t) - \lambda_j(t)\right)
        \right]_{\underline P^l_j}^{\overline P^l_j}
         \nonumber\\
& & \qquad\qquad\qquad\qquad\qquad\qquad\  j\in N		\label{eq:control.2f}
        \eq
where $\gamma^{\lambda}_j, \gamma^{\eta}_{ij}, \gamma^{\tilde\phi}_{ij}, \gamma^g_j, \gamma^l_j$ are positive constants. For any $x_i, a_i \in \mathbb R$, the operator $[x_i]^+_{a_i}$  is defined by 
        \label{eq:control.2}
\end{subequations}
\bqn
[x_i]^+_{a_i}&:=&\left\{
\begin{array}{ll}
x_i& \text{if} \ a_i>0 \ \text{or} \  x_i>0;\\
0,& \text{otherwise}.
\end{array}
 \right.
\eqn

For a vector case, $[x]^+_a$ is defined accordingly componentwise \cite{cherukuri:asymptotic}.

Here we assume that each node $i$ updates a set of internal states
$(\lambda_i(t), \eta^+_{ij}(t), \eta^-_{ji}(t), \tilde\phi_{ij}(t) )$ according
to \eqref{eq:control.2a}--\eqref{eq:control.2d}.\footnote{For each (directed)
link $(i,j)\in E$ we assume that only node $i$ 
maintains the variables $(\eta^+_{ij}(t), \eta^-_{ji}(t), \tilde\phi_{ij}(t) )$.
In practice, node $j$ will probably maintain symmetric variables to reduce 
communication burden or for other reasons outside our mathematical model here.}
In contrast to the \emph{completely decentralized} control derived in the 
per-node balance case,  here the control is \emph{distributed} where each 
node $i$ updates $(\lambda_i(t), \eta^+_{ij}(t), \eta^-_{ij}(t) )$ using only local 
measurements or computation but requires the information $(\lambda_j(t), z_j(t))$
from its neighbors $j$ to update $\tilde\phi_{ij}(t)$.
Note that $z_i(t)$ is not a variable but a shorthand for (function)
$P^g_j(t)-P^l_j(t)-p_j-\hat{U}_j(\tilde\phi(t))$.
The control inputs $(u^g_i(t), u^l_i(t) )$ 
in \eqref{eq:control.2e}\eqref{eq:control.2f} are
functions of the network state $(P^g_i(t), P^l_i(t), \omega_i(t) )$ and the
internal state $(\lambda_i(t), \eta^+_{ij}(t), \eta^-_{ij}(t), \tilde\phi_{ij}(t) )$. 
We write $u^g_j$ and $u^l_j$ as functions of 
$(P^g_j, P^l_j, \omega_j, \lambda_j)$: for $j\in N$
\begin{subequations}
        \bq
        u^g_j(t) & := & u^g_j \left( P^g_j(t), \omega_j(t), \lambda_j(t), z_j(t)\right)
        \label{eq:control.2a'}
        \\
        u^l_j(t) & := & u^l_j \left( P^l_j(t), \omega_j(t), \lambda_j(t), z_j(t) \right)
        \label{eq:control.2b'}
        \eq
        where the functions are defined by the right-hand sides of
        \eqref{eq:control.2e}\eqref{eq:control.2f}.
        \label{eq:control.2'}
\end{subequations}

Now we comment on the implementation of the control \eqref{eq:control.2}. 

\begin{remark}[Implementation]
\bee
         \item As discussed above, communication is needed only between neighboring
         nodes (areas) to update the variables $\tilde\phi_{ij}(t)$.
        \item
        Similar to the per-node power balance case, we can avoid measuring the load change $p_j$  
        by using \eqref{eq:model.1b} and the definition of $z_j(t)$ to replace \eqref{eq:control.2a} with
        \bqn
        z_j(t) & = & 
         M_j \dot \omega_j + D_j\omega_j(t) - \sum_{i: i\rightarrow j} \! P_{ij}(t)
        +  \sum_{k: j\rightarrow k} \! P_{jk}(t) 
        \nonumber \\
          && \ - \ \hat{U}_j(\tilde\phi(t))
           \nonumber\\
        \dot\lambda_j &=& \gamma^{\lambda}_j \, z_j(t)
        \eqn
\eee
\end{remark}

\newcounter{TempEqCnt}
\setcounter{TempEqCnt}{\value{equation}}
\setcounter{equation}{9}
\begin{figure*}[!t]
	
	
	\bq
	L_2(x; \rho) & \!\!\!\! = \!\!\!\! &  
	\frac{1}{2} \left( \sum_{j\in N} \alpha_j \left(P^g_j\right)^2 + \sum_{j\in N} \beta_j \left(P^l_j\right)^2
	+ \sum_{j\in N}  D_j \omega_j^2+ \sum_j z_j^2 \right)
	\ + \ \sum_{j\in N} \mu_j \left(P^g_j - P^l_j  - p_j -  D_j \omega_j 
	+ \sum_{i: i\rightarrow j}  B_{ij} \theta_{ij} -  \sum_{k: j\rightarrow k} \! B_{jk}\theta_{jk}  \!\right)
	\nonumber\\ 
	& & 
	+\ \sum_{j\in N} \lambda_j \left(
	P^g_j - P^l_j  - p_j  
	+ \sum_{i: i\rightarrow j} B_{ij} \phi_{ij}
	- \sum_{k: j\rightarrow k} B_{jk} \phi_{jk} \!\right) 
	\ + \sum_{(i,j)\in E} \eta^-_{ij} \left(\underline{\theta}_{ij}-\phi_{ij}(t)\right)
	\ + \sum_{(i,j)\in E} \eta^+_{ij} \left(\phi_{ij}(t)-\overline{\theta}_{ij}\right)
	\label{eq:defL.2}
	\eq
	
	
	\hrulefill
	\vspace*{2pt}
\end{figure*}  
\setcounter{equation}{\value{TempEqCnt}} 

\subsection{Design rationale}
\label{subsec:design2}
The controller design \eqref{eq:control.2} is also motivated by a (partial) primal-dual algorithm for
\eqref{eq:opt.2}, as for the per-node power balance case.   

\vspace{0.1in}
\noindent
\textbf{Primal-dual algorithms.}
The optimization problem in the network balance case differs from that in the per-node balance case
in the inequalities \eqref{eq:opt.2c} on $\tilde\phi$.
Consider a general constrained convex optimization with inequality constraints:
\bqn
\min_{x\in X} \ \ f(x) & s. t. & g(x) = 0, \ \ h(x)\le 0
\eqn
where $f:\mathbb R^n\rightarrow \mathbb R$, $g:\mathbb R^n\rightarrow \mathbb R^{k_1}$, $h:\mathbb R^n\rightarrow \mathbb R^{k_2}$,
and $X\subseteq \mathbb R^n$ is closed and convex. 
Here an inequality constraint $h(x)\le 0$ is imposed explicitly. 
Let $\rho_1\in\mathbb R^{k_1}$ be the Lagrange multiplier associated
with the equality constraint $g(x)=0$, $\rho_2\in\mathbb R^{k_2}$ that associated with the inequality 
constraint $h(x)\le 0$, and $\rho := (\rho_1, \rho_2)$.  
Define the Lagrangian $L(x; \rho) := f(x) + \rho_1^T g(x)+ \rho_2^T h(x)$.
A standard primal-dual algorithm takes the form:
%
\begin{subequations}
	\bq
	x(t+1) & := & \text{Proj}_X \left( x(t) \ - \ \Gamma^x\, \nabla_x L(x(t); \rho(t)) \right)
\label{eq:primaldual.2a}
	\\
	\rho_1(t+1) & := & \rho_1(t) \ + \ \Gamma^{\rho_1}\, \nabla_{\rho_1} L(x(t); \rho(t))
	\label{eq:primaldual.2b}
\\
	\rho_2(t+1) & := & \left(\rho_2(t) \ + \  \Gamma^{\rho_2}\, \nabla_{\rho_2} L(x(t); \rho(t))\right)^+
\label{eq:primaldual.2c}
	\eq
where $\Gamma^x, \Gamma^{\rho_1}, \Gamma^{\rho_2}$ are strictly positive diagonal gain matrices.
Here, if $a$ is a scalar then $(a)^+ := \max\{a, 0\}$ and if $a$ is a vector then 
$(a)^+$ is defined accordingly componentwise.
For a dual algorithm, \eqref{eq:primaldual.2a} is replaced by 
	\bq
	x(t) & := & \min_{x\in X}\, L(x; \rho(t))
	\label{eq:da.2b}
	\eq
As for the per-node balance case, all variables in $x(t)$ are updated according to
\eqref{eq:primaldual.2a} except $\omega(t)$ which is updated according to \eqref{eq:da.2b},
as we see below.
\label{eq:primaldual.2}
\end{subequations}

The set $X$ in \eqref{eq:primaldual.2a} is defined by the constraints 
\eqref{eq:OpConstraints.1}:
\bq
\!\!\!\!\!
X & \!\!\!\!\ := \!\!\!\! & \left\{(P^g, P^l) : 
(\underline{P}^g, \underline{P}^l)  \ \leq \ (P^g, P^l) \ \leq \
(\overline{P}^g, \overline{P}^l)  \right\}
\label{eq:defX}
\eq

\vspace{0.1in}
\noindent
\textbf{Controller \eqref{eq:control.2} design.} Let $\rho_1 := (\lambda, \mu)$ be
 the Lagrange multipliers associated with constraints \eqref{eq:opt.2b} and \eqref{eq:opt.2a}
 respectively,
 $\rho_2 := (\eta^+, \eta^-)$ the  multipliers associated with constraints \eqref{eq:opt.2c}, 
 and $\rho:=(\rho_1, \rho_2)$.
Define the Lagrangian of \eqref{eq:opt.2} by (\ref{eq:defL.2}). 
Note that it is only a function of $(x, \rho)$ and independent of $u := (u^g, u^l)$ as we 
treat $u$ as a function of $(x, \rho)$
defined by the right-hand sides of \eqref{eq:control.2e}\eqref{eq:control.2f}.

The closed-loop dynamics \eqref{eq:model.1}\eqref{eq:control.2} carry out 
        an approximate primal-dual algorithm \eqref{eq:primaldual.2} for 
        solving \eqref{eq:opt.2} in real time over the coupled physical power network and cyber computation. 
Since the reasoning is similar to the per-node balance case, we only provide a summary.
Rewrite the Lagrangian $L_2$ in  vector form 
\setcounter{equation}{10} 
\bq
\label{eq:defL.21}
L_2(x; \rho) &  =& \frac{1}{2} \left( (P^g)^T A^g P^g + (P^l)^T A^l P^l + \omega^T D \omega + z^Tz \right)
\nonumber \\
&& \ + \ \lambda^T \!\! \left( P^g - P^l - p - CB\tilde\phi \right) 
\nonumber \\
& & \ + \ \mu^T \!\! \left( P^g - P^l - p -D\omega - C B\tilde \theta \right)
\nonumber \\
&& \ + \ (\eta^+)^T \left(\tilde \phi-\overline{\theta} \right) \ + \ (\eta^-)^T\left(\underline{\theta}-\tilde\phi\right)
\eq
where $A^l := \diag(\beta_j, j\in N)$, $B:=\diag(B_{ij}, (i,j)\in E)$.

First, the control \eqref{eq:control.2b}\eqref{eq:control.2c} can be interpreted as
a continuous-time version of the dual update \eqref{eq:primaldual.2c}
on the dual variable $\rho_2 := (\eta^+(t), \eta^-(t))$:
\begin{subequations}
\bqn
\dot \eta^+ & = & \Gamma^{\eta} \, \left[ \nabla_{\eta^+} L_2 (x(t); \rho(t)) \right]^+_{\eta^+(t)}
\\
\dot \eta^- & = & \Gamma^{\eta^-} \, \left[ \nabla_{\eta^-} L_2 (x(t); \rho(t)) \right]^+_{\eta^-}
\eqn
where $\Gamma^{\eta} := \diag(\gamma^{\eta}_{ij}, (i,j)\in E)$.

Second, the control \eqref{eq:control.2a} carries out the dual update 
\eqref{eq:primaldual.2b} on $\lambda(t)$:  
        \bq
        \dot\lambda & = &  \Gamma^{\lambda}\ \nabla_{\lambda} L_2 (x(t), \rho(t))
        \label{eq:dual.2d}
        \eq
where $\Gamma^{\lambda} := \diag(\gamma^{\lambda}_j, j\in N)$.  
The swing dynamic \eqref{eq:model.1b} carries out  
the dual update \eqref{eq:primaldual.2b} on $\mu(t)$ because, 
as in the per-node balance case, we can identify $\mu(t) \equiv \omega(t)$ so that
        \bq
        \dot\mu & = & \dot\omega \ \ = \ \ M^{-1}\ \nabla_\mu L_2 (x(t); \rho(t))
        \label{eq:dual.2c}
        \eq
where $M := \diag(M_j, j\in N)$.  

Finally we  show that \eqref{eq:model.1a},
\eqref{eq:model.1c}, \eqref{eq:model.1d},  and \eqref{eq:control.2d} implement a
mix of the primal updates \eqref{eq:primaldual.2a} and \eqref{eq:da.2b} on the primal variables
$x := (\tilde{\theta}(t ); \tilde{\phi}(t ); \omega(t ); P^g(t ); P^l(t ))$.
Setting $\omega(t) \equiv \mu(t)$ is equivalent to the primal update \eqref{eq:da.2b}
on $\omega(t)$, as in the per-node balance case.   Moreover the
control laws \eqref{eq:control.2e}\eqref{eq:control.2f} are then equivalent to 
\begin{align}
        T^g\dot P^g & \ =\  \left[ P^g(t) - \, \Gamma^g\, \nabla_{P^g} L_2 (x(t), \rho(t))
        \right]_{\underline P^g}^{\overline P^g} \ - \ P^g(t)
        \label{eq:primal.2a} \\    
        T^l\dot P^l & \ =\  \left[ P^l(t) - \, \Gamma^l\, \nabla_{P^l} L_2 (x(t), \rho(t))
        \right]_{\underline P^l}^{\overline P^l} \ - \ P^l(t)    
        \label{eq:primal.2b} 
\end{align}
i.e., the generator and controllable load at each node $j$ carry out the primal update 
\eqref{eq:primaldual.2a}.  
For $(\tilde\theta, \tilde\phi)$, \eqref{eq:model.1a} and \eqref{eq:control.2d} are equivalent
to the primal update \eqref{eq:primaldual.2a}:
        \bq
        \dot{ \tilde{\theta}}& = & - B^{-1}\nabla_{\tilde\theta} L_2 (x(t), \rho(t))
        \label{eq:primal.2c}
\\
        \dot {\tilde{\phi}} & = & -\Gamma^{\tilde{\phi}} \ \nabla_{\tilde{\phi}} L_2 (x(t), \rho(t))
        \label{eq:primal.2d}
        \eq
        where $\Gamma^{\tilde{\phi}} := \diag(\gamma^{\tilde\phi}_{ij}, (i,j)\in E)$
        \label{eq:primaldual.3}. 
\end{subequations}

\subsection{Optimality of equilibrium point}
\label{subsec:optimality.1}

In this subsection, we address the optimality of the equilibrium point of the closed-loop system \eqref{eq:model.1}\eqref{eq:control.2}. 
Given an $(x, \rho) := \left( (\tilde{\theta}, {\tilde{\phi}}, \omega,  P^{g}, P^{l}), (\lambda, \mu), (\eta^-, \eta^+)\right)$, recall that
the control input $u(x, \rho_1, \rho_2)$ is given by \eqref{eq:control.2'}.
\begin{definition}
        \label{def:ep.2}
        A point $(x^*, \rho^*) := ( \tilde{\theta}^*, {\tilde \phi}^*, \omega^*,  P^{g*}, P^{l*}, \lambda^*, \eta^{+*}, $ $ \eta^{-*}, \mu^* )$
        is an \emph{equilibrium point} or an \emph{equilibrium} of the closed-loop system 
        \eqref{eq:model.1}\eqref{eq:control.2} if 
        \bee
        \item The right-hand side of \eqref{eq:model.1} vanishes at $x^*$ and $u(x^*, \rho^*)$.  
        \item The right-hand side of \eqref{eq:control.2a}--\eqref{eq:control.2d} vanishes at $(x^*, \rho^*)$.
        \eee
\end{definition}

\begin{definition}
        A point $(x^*, \rho^*)$ is \emph{primal-dual optimal} if $(x^*, u(x^*, \rho^*))$ is optimal 
        for \eqref{eq:opt.2} and $\rho^*$ is optimal for its dual problem.
\end{definition}

We  make the following assumption:
\bi
\item[\textbf{A2:}]  
     The problem (\ref{eq:opt.2}) is feasible. 
\ei 
 
 The following theorem characterizes the correspondence between the equilibrium of the closed-loop system \eqref{eq:model.1}\eqref{eq:control.2} and the primal-dual optimal solution of  \eqref{eq:opt.2}. 
\begin{theorem}
        \label{thm:5}
        Suppose  A2  holds.   A point  $(x^*, \rho^*)$ is primal-dual optimal if
        and only if  $(x^*, \rho^*)$ is an equilibrium of  closed-loop 
        system \eqref{eq:model.1}\eqref{eq:control.2}  satisfying \eqref{eq:OpConstraints.1}
        and $\mu^*=0$.
\end{theorem}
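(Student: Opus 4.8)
The plan is to prove the equivalence by reducing \emph{both} directions to the Karush--Kuhn--Tucker (KKT) conditions of \eqref{eq:opt.2}. First note that \eqref{eq:opt.2} is a convex program with affine constraints: the objective \eqref{eq:opt.2o} is convex in $x=(\tilde\theta,\tilde\phi,\omega,P^g,P^l)$ because $\alpha_j,\beta_j,D_j>0$ and $z^Tz$ is convex, while \eqref{eq:OpConstraints.1} and \eqref{eq:opt.2a}--\eqref{eq:opt.2e} are affine. Hence, under A2, strong duality holds and $(x^*,u(x^*,\rho^*))$ is optimal for \eqref{eq:opt.2} with $\rho^*=\big((\lambda^*,\mu^*),(\eta^{+*},\eta^{-*})\big)$ dual optimal if and only if the KKT conditions hold at $(x^*,\rho^*)$: (i) primal feasibility \eqref{eq:OpConstraints.1}, \eqref{eq:opt.2a}--\eqref{eq:opt.2e}; (ii) dual feasibility $\eta^{+*}\ge 0$, $\eta^{-*}\ge 0$; (iii) complementary slackness $\eta^{+*}_{ij}\big(\tilde\phi^*_{ij}-\overline{\theta}_{ij}\big)=0$ and $\eta^{-*}_{ij}\big(\underline{\theta}_{ij}-\tilde\phi^*_{ij}\big)=0$ for all $(i,j)\in E$; and (iv) stationarity of $L_2$ in $x$, namely $\nabla_{\tilde\theta}L_2=\nabla_{\tilde\phi}L_2=\nabla_\omega L_2=0$ together with the box-constrained stationarity in $(P^g,P^l)$, which (see below) is equivalent to the fixed-point identities $P^{g*}=\big[P^{g*}-\Gamma^g\nabla_{P^g}L_2\big]_{\underline{P}^g}^{\overline{P}^g}$ and $P^{l*}=\big[P^{l*}-\Gamma^l\nabla_{P^l}L_2\big]_{\underline{P}^l}^{\overline{P}^l}$, all evaluated at $(x^*,\rho^*)$.

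Next I would match this KKT system term by term with Definition \ref{def:ep.2}, using the identification $\mu\equiv\omega$ and the correspondences \eqref{eq:dual.2c}, \eqref{eq:dual.2d}, \eqref{eq:primal.2a}--\eqref{eq:primal.2d} established in the design rationale. Vanishing of the right-hand side of \eqref{eq:model.1a} gives $\omega^*=0$, hence $\mu^*=0$; conversely, $\mu^*=0$ is forced at any KKT point, since stationarity in $\tilde\theta$ gives $C^T\mu^*=0$ so $\mu^*\in\mathrm{span}(\mathbf 1)$, stationarity in $\omega$ gives $\omega^*=\mu^*$, and subtracting \eqref{eq:opt.2b} from \eqref{eq:opt.2a} and summing over $j$ gives $\sum_j D_j\omega^*_j=0$, whence $\mu^*=\omega^*=0$. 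Vanishing of \eqref{eq:model.1b} is exactly \eqref{eq:opt.2a}; vanishing of \eqref{eq:control.2a} is exactly \eqref{eq:opt.2b}, which makes $z^*=0$; vanishing of \eqref{eq:model.1c} (at $\omega^*=0$) and \eqref{eq:model.1d} yields \eqref{eq:opt.2d}, \eqref{eq:opt.2e}, which combined with \eqref{eq:control.2e}, \eqref{eq:control.2f} gives the box fixed-point identities of (iv) ($\mu^*=0$ being used to identify the bracketed terms with $\nabla_{P^g}L_2$, $\nabla_{P^l}L_2$); vanishing of \eqref{eq:primal.2c} is $C^T\mu^*=0$, automatic once $\mu^*=0$; vanishing of \eqref{eq:control.2d} is precisely $\nabla_{\tilde\phi}L_2=0$; and vanishing of \eqref{eq:control.2b}, \eqref{eq:control.2c} translates, through the definition of $[\cdot]^+_a$, into \eqref{eq:opt.2c} together with (ii) and (iii) for the line limits. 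Finally, since the range of $[\cdot]_{\underline{P}^g}^{\overline{P}^g}$ equals $[\underline{P}^g,\overline{P}^g]$ (and similarly for $P^l$), every equilibrium automatically satisfies \eqref{eq:OpConstraints.1}. Collecting these equivalences shows that $(x^*,\rho^*)$ is an equilibrium of \eqref{eq:model.1}\eqref{eq:control.2} satisfying \eqref{eq:OpConstraints.1} and $\mu^*=0$ if and only if (i)--(iv) hold, which by the first paragraph is equivalent to primal-dual optimality.

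The step I expect to demand the most care is the translation of the non-smooth pieces of the dynamics into the KKT conditions (iii)--(iv). For the box constraints one uses the elementary fact that, for diagonal $\Gamma^g>0$, the identity $P^{g*}=\big[P^{g*}-\Gamma^g w\big]_{\underline{P}^g}^{\overline{P}^g}$ holds if and only if $P^{g*}\in[\underline{P}^g,\overline{P}^g]$ and, componentwise, $w_j\le 0$ when $P^{g*}_j=\overline{P}^g_j$, $w_j\ge 0$ when $P^{g*}_j=\underline{P}^g_j$, and $w_j=0$ when $\underline{P}^g_j<P^{g*}_j<\overline{P}^g_j$ (equivalently, $-w$ lies in the normal cone to $[\underline{P}^g,\overline{P}^g]$ at $P^{g*}$); applying this with $w=\nabla_{P^g}L_2(x^*,\rho^*)$, and likewise for $P^l$, is exactly the box-constrained stationarity in (iv). For the $\eta^{\pm}$-dynamics one uses that $[y]^+_a=0$ holds exactly when either $a>0$ and $y=0$, or $a\le 0$ and $y\le 0$, restricted to the forward-invariant set $\{\eta^{+}\ge 0,\ \eta^{-}\ge 0\}$ to which the operator $[\cdot]^+_a$ confines the $\eta$-dynamics; on that set the equilibrium conditions on \eqref{eq:control.2b}, \eqref{eq:control.2c} are precisely dual feasibility, primal feasibility \eqref{eq:opt.2c}, and complementary slackness for the line limits, and keeping the multiplier signs straight there is the most delicate bookkeeping. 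Apart from these two projection facts --- which supply exactly the extra $\tilde\phi$- and $\eta^{\pm}$-terms absent from the per-node case --- the argument is structurally the same as the proof of the corresponding result in Part I \cite{Wang:DistributedFrequency}, so I would keep the write-up brief and refer there for the shared details.
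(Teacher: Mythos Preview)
Your proposal is correct and follows essentially the same KKT-matching approach as the paper. Both arguments reduce the equivalence to checking that the equilibrium conditions of \eqref{eq:model.1}\eqref{eq:control.2} (with $\mu^*=0$) coincide with the KKT system of \eqref{eq:opt.2}, and both handle the projection in \eqref{eq:control.2e}\eqref{eq:control.2f} via the same fixed-point characterization (this is the paper's Lemma~A.2, which you spell out explicitly as the normal-cone condition).

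The one genuine difference is how $\omega^*=0$ (and hence $\mu^*=0$) is obtained in the direction ``primal-dual optimal $\Rightarrow$ equilibrium.'' The paper invokes its Lemma~A.1, a direct variational argument: if $\omega^*\neq 0$, replace $(\theta^*,\omega^*)$ by $(\phi^*,0)$ to get a feasible point with strictly lower cost. You instead extract $\omega^*=\mu^*=0$ purely from the KKT stationarity and feasibility conditions: $\nabla_{\tilde\theta}L_2=0$ gives $C^T\mu^*=0$, $\nabla_\omega L_2=0$ gives $\omega^*=\mu^*$, and summing the difference of \eqref{eq:opt.2a} and \eqref{eq:opt.2b} gives $\sum_j D_j\omega_j^*=0$. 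Your route is more algebraic and entirely self-contained within the KKT framework; the paper's route is shorter but requires the separate variational lemma. Either is fine. Your additional remark that \eqref{eq:OpConstraints.1} is automatically satisfied at any equilibrium (since $P^{g*}$ equals the projected quantity in \eqref{eq:control.2e} once $\omega^*=0$) is correct and slightly sharpens the statement.
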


Next result says that, at equilibrium, the network balance condition \eqref{eq:balance.net} and 
line limits \eqref{eq:lineConstraint} are satisfied and the nominal frequency is
restored.  Moreover the equilibrium is unique.
\begin{theorem}
        \label{thm:6}
        Suppose  A1 and A2 hold.  Let $(x^*, \rho^*)$ be primal-dual optimal. 
        Then
        \bee
        \item The equilibrium
        $(x^*, \mu^*)$ is unique, with $(\theta^*, \phi^*)$ being unique
        up to (equilibrium) reference angles $(\theta_0, \phi_0)$.
        \item The nominal frequency is restored, i.e., $\omega^*_j=0$ for all $j\in N$;
        moreover $\tilde\phi^*_{ij} = \tilde\theta^*_{ij}$ for all $(i,j)\in E$.
        \item The network balance condition \eqref{eq:balance.net} is satisfied by $x^*$.
        \item The line limits \eqref{eq:lineConstraint} are satisfied by $x^*$, implying
        $\underline{P}_{ij} \le P_{ij} \le \overline{P}_{ij}$ on every tie line $(i,j)\in E$.
        \eee 
\end{theorem}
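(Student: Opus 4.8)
The plan is to extract everything from the KKT conditions of NBO~\eqref{eq:opt.2}, using Theorem~\ref{thm:5} to pass from ``primal-dual optimal'' to the closed-loop picture (so that $\mu^*=0$ and the capacity limits~\eqref{eq:OpConstraints.1} hold), and Lemma~\ref{lemma:6} for the two facts it already establishes at optimality, namely $\omega^*=0$ and $\phi^*-\theta^*=\mathbf 1(\phi_0-\theta_0)$. Concretely, I would first write the KKT system associated with the Lagrangian~\eqref{eq:defL.21}: primal feasibility~\eqref{eq:opt.2a}--\eqref{eq:opt.2c} with $(P^{g*},P^{l*})\in X$; $\eta^{+*},\eta^{-*}\ge 0$ with complementary slackness on~\eqref{eq:opt.2c}; and stationarity in $\tilde\theta,\tilde\phi,\omega,P^g,P^l$ (the last two being variational inequalities over the box $X$). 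A key simplification is that~\eqref{eq:opt.2b} forces $z^*=0$, so every $z$-term drops out when gradients are evaluated at the optimum; in particular stationarity in $\omega$ reads $D\omega^*=D\mu^*=0$, giving $\omega^*_j=0$ for all $j$ (equivalently, this is Lemma~\ref{lemma:6}, or simply the vanishing of the right-hand side of~\eqref{eq:model.1a} at an equilibrium). This is the first half of item~2.

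For the second half of item~2, I would subtract~\eqref{eq:opt.2b} from~\eqref{eq:opt.2a} to get $CB(\tilde\phi^*-\tilde\theta^*)=D\omega^*=0$. Since $\tilde\phi^*-\tilde\theta^*$ lies in $\range(C^T)$, writing it as $C^T\psi$ gives $\psi^T C B C^T\psi=0$; as $B\succ0$ and the graph is connected, $CBC^T$ is positive semidefinite with kernel $\mathrm{span}\{\mathbf 1\}$, so $C^T\psi=0$ and hence $\tilde\phi^*=\tilde\theta^*$ (alternatively this is immediate from Lemma~\ref{lemma:6}). Items~3 and~4 are then short: summing~\eqref{eq:opt.2b} over $j\in N$ and using $\mathbf 1^T C=0$ kills the $\hat U$ term and leaves $\sum_j P^{g*}_j=\sum_j(P^{l*}_j+p_j)$, which is~\eqref{eq:balance.net}; and substituting $\tilde\phi^*=\tilde\theta^*$ into~\eqref{eq:opt.2c} gives $\underline\theta_{ij}\le\theta^*_i-\theta^*_j\le\overline\theta_{ij}$, which after multiplication by $B_{ij}>0$ and using $P_{ij}=B_{ij}(\theta_i-\theta_j)$, $\underline\theta_{ij}=\underline P_{ij}/B_{ij}$, $\overline\theta_{ij}=\overline P_{ij}/B_{ij}$ becomes $\underline P_{ij}\le P_{ij}\le\overline P_{ij}$, i.e.~\eqref{eq:lineConstraint}.

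It remains to prove uniqueness (item~1). On the feasible set of~\eqref{eq:opt.2}, constraint~\eqref{eq:opt.2b} forces $z=0$, so the objective~\eqref{eq:opt.2o} reduces to $\sum_j\tfrac{\alpha_j}{2}(P^g_j)^2+\sum_j\tfrac{\beta_j}{2}(P^l_j)^2+\sum_j\tfrac{D_j}{2}\omega_j^2$, which is strictly convex in $(P^g,P^l,\omega)$; hence the optimal $(P^{g*},P^{l*},\omega^*)$ is unique (with $\omega^*=0$ as shown). Given $(P^{g*},P^{l*})$, constraint~\eqref{eq:opt.2b} reads $CBC^T\phi^*=P^{g*}-P^{l*}-p$; under A1 the reference $\phi_0=0$ reduces this to a system whose coefficient matrix is the invertible reduced weighted Laplacian, so $\phi^*$---and therefore $\tilde\phi^*$---is unique, and then $\tilde\theta^*=\tilde\phi^*$ with $\theta_0=0$ makes $\theta^*$ unique; without A1 one obtains $\theta^*$ and $\phi^*$ only up to a common additive constant, i.e.\ up to (equilibrium) reference angles. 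Since $\mu^*=0$ is forced by Theorem~\ref{thm:5}, the pair $(x^*,\mu^*)$ is unique.

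I expect item~1 to be the main obstacle: one has to argue carefully that $CBC^T\phi^*=$(fixed vector) determines $\phi^*$ modulo constants---this rests on the connected-graph weighted Laplacian having one-dimensional kernel $\mathrm{span}\{\mathbf 1\}$---and then state the ``unique up to reference angles'' conclusion precisely. A secondary point requiring care is reading off the KKT stationarity conditions from~\eqref{eq:defL.21} correctly: treating the $(P^g,P^l)$-stationarity as a variational inequality over the box $X$, and using $z^*=0$ to discard the $z$-terms so that the conditions match the vanishing right-hand sides in~\eqref{eq:model.1}--\eqref{eq:control.2}.
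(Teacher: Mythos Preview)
Your proposal is correct and follows essentially the same route as the paper: items~2--4 are exactly Lemma~\ref{lemma:6} (feasibility of~\eqref{eq:opt.2b} gives $z^*=0$ and network balance; subtracting~\eqref{eq:opt.2a} from~\eqref{eq:opt.2b} with $\omega^*=0$ gives $\tilde\phi^*=\tilde\theta^*$ via the Laplacian kernel; line limits follow from~\eqref{eq:opt.2c}), and item~1 is obtained, as in the paper, from strict convexity of the objective in $(\omega,P^g,P^l)$, the rank-$n$ weighted Laplacian $CBC^T$ pinning $\phi^*$ (hence $\theta^*$) once the reference angles are fixed, and $\mu^*=\omega^*=0$.
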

Theorem \ref{thm:6} shows that the equilibrium point has a simple yet intuitive structure. 
Moreover, Theorem \ref{thm:6} implies that the closed-loop system can autonomously 
eliminate congestions on tie lines. This feature has important implications. It means our distributed frequency control is capable of serving as a corrective re-dispatch without the coordination of dispatch centers if a congestion 
arises.   This can enlarge the feasible region for economic dispatch,
since corrective re-dispatch has been naturally taken into account. 

The proofs of Theorem \ref{thm:5} and \ref{thm:6} are given in Appendix A.

\subsection{Asymptotic stability}
 In this subsection, we address the asymptotic stability of the closed-loop 
 system \eqref{eq:model.1}\eqref{eq:control.2}, under an additional assumption:
 \bi
 \item[\textbf{A3:}] The initial state of the closed-loop system  \eqref{eq:model.1}\eqref{eq:control.2} is finite, and $p^g_j(0)$, $p^l_j(0)$  satisfy constraint \eqref{eq:OpConstraints.1}. 
 \ei
 
 As in the per-node balance case the closed-loop system \eqref{eq:model.1}\eqref{eq:control.2} 
satisfies constraint \eqref{eq:OpConstraints.1} even during transient.
\begin{lemma}
        \label{lemma:bounded.2}
        Suppose A1 and A3 hold. 
        Then constraint \eqref{eq:OpConstraints.1} is satisfied for all  $t>0$, i.e. 
        $(P^g(t), P^l(t))\in X$ for all $t\geq 0$ where $X$ is defined in \eqref{eq:defX}.    
\end{lemma}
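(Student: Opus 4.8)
The plan is to exploit the saturated structure of the generation and load dynamics. Substituting the control laws \eqref{eq:control.2e}\eqref{eq:control.2f} into the plant equations \eqref{eq:model.1c}\eqref{eq:model.1d} --- equivalently, using the reformulation \eqref{eq:primal.2a}\eqref{eq:primal.2b} --- along any solution of the closed-loop system the generation and load states obey, componentwise,
\[
T^g_j \dot P^g_j(t) \; = \; \sigma^g_j(t) - P^g_j(t), \qquad T^l_j \dot P^l_j(t) \; = \; \sigma^l_j(t) - P^l_j(t),
\]
where $\sigma^g_j(t) := \bigl[\, P^g_j(t) - \gamma^g_j\bigl(\alpha_j P^g_j(t) + \omega_j(t) + z_j(t) + \lambda_j(t)\bigr) \bigr]_{\underline P^g_j}^{\overline P^g_j}$ and $\sigma^l_j(t)$ is defined analogously from \eqref{eq:control.2f} (the $\omega_j/R_j$ terms in \eqref{eq:model.1c} and \eqref{eq:control.2e} cancel). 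The only property of the right-hand side the argument will use is that the outer saturation forces $\sigma^g_j(t)\in[\underline P^g_j,\overline P^g_j]$ and $\sigma^l_j(t)\in[\underline P^l_j,\overline P^l_j]$ for \emph{every} $t$, regardless of the values of the remaining states.

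Given this, I would run a standard forward-invariance argument componentwise. Fix $j$ and the upper bound of $P^g_j$; put $e(t) := P^g_j(t) - \overline P^g_j$. Since $\sigma^g_j(t)\le \overline P^g_j$, we get $T^g_j\dot e(t) = \bigl(\sigma^g_j(t)-\overline P^g_j\bigr) - e(t) \le -e(t)$ for a.e.\ $t$. Taking the nonnegative absolutely continuous function $V(t) := \bigl([e(t)]^+\bigr)^2$ --- a composition of the globally Lipschitz map $s\mapsto (s^+)^2$ with the locally absolutely continuous curve $P^g_j(\cdot)$ --- yields $\dot V(t) = 2\,[e(t)]^+\dot e(t) \le -\tfrac{2}{T^g_j}\bigl([e(t)]^+\bigr)^2 \le 0$ for a.e.\ $t$. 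By A3, $P^g_j(0)\le \overline P^g_j$, so $V(0)=0$, hence $V\equiv 0$ and $P^g_j(t)\le \overline P^g_j$ for all $t\ge 0$. The lower bound on $P^g_j$ follows verbatim with $e(t) := \underline P^g_j - P^g_j(t)$ and $\sigma^g_j(t)\ge \underline P^g_j$, and both bounds on $P^l_j$ follow in the same way. Therefore $(P^g(t),P^l(t))\in X$ for all $t\ge 0$, with $X$ as in \eqref{eq:defX}.

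Two points deserve care, although neither is a genuine obstacle. First, the closed-loop right-hand side is discontinuous (through the operators $[\cdot]^+_{(\cdot)}$ in \eqref{eq:control.2b}\eqref{eq:control.2c}), so solutions are understood in the Carath\'eodory/Filippov sense; I would note that the $(P^g,P^l)$-subsystem's right-hand side is nonetheless continuous in all state variables, and that along any such solution $P^g_j(\cdot)$ and $P^l_j(\cdot)$ are locally absolutely continuous and satisfy their ODEs almost everywhere --- which is all the estimate on $\dot V$ needs. Second, the differentiation of $V$ through the nonsmooth map $[\cdot]^+$ should be justified, e.g.\ via the chain rule for the composition of a Lipschitz function with an absolutely continuous curve, or simply by observing that $\dot V(t)=2e(t)\dot e(t)$ on $\{e>0\}$ and $\dot V(t)=0$ a.e.\ on $\{e\le 0\}$. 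Since the bounds obtained are uniform in $t$, the $(P^g,P^l)$-trajectory does not blow up in finite time, consistent with the finiteness hypothesis in A3. The whole argument mirrors the per-node balance case of Part~I; the only real work is the bookkeeping around the nonsmoothness, which I expect to be the main (minor) obstacle.
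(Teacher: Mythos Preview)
Your proposal is correct and is essentially the argument the paper intends: the paper omits the proof entirely, stating that it is ``exactly the same as that for Lemma~3 in \cite{Wang:DistributedFrequency},'' and your saturation-plus-forward-invariance argument (reduce \eqref{eq:model.1c}\eqref{eq:model.1d} with \eqref{eq:control.2e}\eqref{eq:control.2f} to $T^g_j\dot P^g_j=\sigma^g_j-P^g_j$ with $\sigma^g_j\in[\underline P^g_j,\overline P^g_j]$, then show the box is forward invariant) is precisely that argument. Your additional care with Carath\'eodory solutions and the a.e.\ chain rule for $([e]^+)^2$ is appropriate given the discontinuous right-hand side elsewhere in the closed loop, but is not needed beyond what you wrote.
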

The proof is exactly the same as that for Lemma 3 in \cite{Wang:DistributedFrequency} and omitted.

%

Similar to the per-node balance case, we first rewrite the closed-loop system using states 
$\tilde{\theta}, \tilde{\phi}$ instead of $\theta, \phi$ (they are equivalent under assumption A1).
Setting $\mu\equiv \omega$, the closed-loop system \eqref{eq:model.1}\eqref{eq:control.2} 
is equivalent to (in vector form):
        \begin{subequations}
\bq
                \dot {\tilde \theta}(t)&=&   C^T\omega(t) \\
\label{eq:closedloop.2a}
                \nonumber
                \dot \omega (t)&=&M^{-1}\left(P^g(t)-P^l(t)-p-D\omega(t)-CB\tilde \theta(t) \right ) \label{eq:closedloop.2b}\\
                \label{eq:model.2b}\\
                \dot{P}^g(t)&=&(T^g)^{-1}\left ( -P^g(t)+\hat u^g(t) \right)
\label{eq:closedloop.2c}\\ 
                \dot{P}^l(t)&=&(T^l)^{-1}\left ( -P^l(t)+\hat u^l(t) \right )
\label{eq:closedloop.2d}\\
\dot \eta^+(t)&=& \Gamma^{\eta}[\tilde\phi(t)-\overline \theta]^+_{\eta^+}
\label{eq:closedloop.2f}\\
\dot \eta^-(t)&=& \Gamma^{\eta}[\underline \theta-\tilde\phi(t)]^+_{\eta^-}
\label{eq:closedloop.2g}\\
                \dot{\lambda}(t)&=&\Gamma^{\lambda} \left (P^g(t)-P^l(t)-p-CB\tilde \phi(t) \right)
\label{eq:closedloop.2e}\\
\dot {\tilde \phi}(t)&=& \Gamma^{\tilde \phi} \left ( BC^T\lambda(t)
+BC^Tz(t)+\eta^-(t)-\eta^+(t)
\label{eq:closedloop.2h}
\right) 
\eq
where $z(t) := P^g(t) - P^l(t) - p - CB\tilde \phi(t)$ and
\label{eq:closedloop.2}
\end{subequations}
\bqn
	\hat u^g(t) & = & \left[ 
	P^g(t) - \Gamma^g \left( A^g P^g_j(t) + \omega(t)+z(t) + \lambda(t) \right) \right]_{\underline P^g}^{\overline P^g}
	\\
	\hat u^l_j(t) & = & \left[ 
	P^l(t) - \Gamma^l \left( A^l P^l_j(t) - \omega(t)-z(t) - \lambda(t)\right)
	\right]_{\underline P^l}^{\overline P^l}
\eqn
Denote $w:=(\tilde {\theta}, \omega, P^g, P^l, \lambda, \eta^+, \eta^-, \tilde{\phi})$.

Note that the right-hand sides of \eqref{eq:closedloop.2f}\eqref{eq:closedloop.2g}
are discontinuous due to projection to the nonnegative quadrant for 
$(\eta^{+}(t), \eta^-(t))$.  The system
\eqref{eq:closedloop.2} is called a projected dynamical system and we adopt the
concept of Caratheodory solutions for such a system where a trajectory $(w(t), t\geq 0)$
is called a Caratheodory solution, or just a solution, to \eqref{eq:closedloop.2} if it is 
absolutely continuous in $t$ and satisfies \eqref{eq:closedloop.2} almost everywhere.
The result in \cite[Theorems 2 and 3]{DupuisNagurney1993} implies that, given any initial
state, there exists a unique solution trajectory to the closed-loop 
system \eqref{eq:closedloop.2} as the unprojected system is Lipschitz and
the nonnegative quadrant is closed and convex.
See \cite[Theorem 3.1]{Monica:Existence} for extension of this result to 
the Hilbert space.

With regard to system \eqref{eq:closedloop.2}, we first define two sets, $\sigma^+$ and $\sigma^-$, as follows \cite{Feijer:Stability}.
\bqn
\sigma^+ &:=& \{(i,j)\in E \, | \, \eta^+_{ij}=0,         \, \tilde{\phi}_{ij}-\overline{\theta}_{ij}<0
\}\\
\sigma^- &:=& \{(i,j)\in E \, | \, \eta^-_{ij}=0,         \, \underline{\theta}_{ij}-\tilde{\phi}_{ij}<0
\}
\eqn
Then \eqref{eq:control.2b} and \eqref{eq:control.2c}
are equivalent to
\begin{subequations}
	\bq
	\dot\eta^+_{ij} &=& \left\{
	\begin{array}{ll}
		\gamma^{\eta}_{ij}(\tilde{\phi}_{ij}-\overline{\theta}_{ij}),
		& \text{if} (i,j) \notin \sigma^+ ;\\
		0,
		& \text{if} (i,j) \in \sigma^+ .        \end{array}
	\right.\\
	\dot\eta^-_{ij} &=& \left\{
	\begin{array}{ll}
		\gamma^{\eta}_{ij}(\underline{\theta}_{ij}-\tilde{\phi}_{ij}),
		& \text{if} (i,j) \notin \sigma^- ;\\
		0,
		& \text{if} (i,j) \in \sigma^- .        \end{array}
	\right.
	\eq
	\label{eq:eta}
\end{subequations}

In a fixed $\sigma^+, \sigma^-$, define $F(w)$. 
 \bq
  F(w)&=&\left [ 
   \begin{array}{l}
               -B^{1/2}C^T\omega \\
               -M^{-1/2}\left(P^g-P^l-p-D\omega-CB\tilde \theta \right )\\ 
               (T^g)^{-1} \left(A^g P^g+\omega+z+\lambda\right)\\
               (T^l)^{-1} \left( A^l P^l-\omega-z-\lambda\right)\\ 
               -(\Gamma^{\eta})^{1/2}[\tilde\phi-\overline \theta]^+_{\eta^+} \\
               -(\Gamma^{\eta})^{1/2}[\underline \theta-\tilde\phi]^+_{\eta^-} \\
               -(\Gamma^{\lambda})^{1/2} \left (P^g-P^l-p-CB\tilde
\phi \right)\\               
              -(\Gamma^{\tilde{\phi}})^{1/2} \left ( BC^T\lambda+BC^Tz+\eta^- - \eta^+ \right)
    \end{array}  \right ] 
         \label{eq:Fz.2}
 \eq
If $\sigma^+$ and $\sigma^-$ do not change, $F(w)$ is continuously differentiable in $w$.
        
        
       Similarly, we define 
        $
        S   :=  \mathbb R^{m+n+1}\times X \times \mathbb R^{2m+n+1+m}
        $, where the closed convex set $X$ is defined in \eqref{eq:defX}.
         Then for any $w$ we define the projection of $w-F(w)$ onto  $S$ as  
$$H(w) :=  \text{Proj}_S (w-F(w)) := \ \arg\min_{y\in S} \| y - (w-F(w)) \|_2 $$
Then the closed-loop system \eqref{eq:closedloop.2} is equivalent to 
\bq
\label{eq:model.4}
\dot w(t)&=&\Gamma_2 (H(w(t))-w(t))
\eq
where the positive definite gain matrix is :
\begin{align}
	\Gamma_2\  =\ \diag\ \bigg(B^{-1/2}, M^{-1/2}, (T^g)^{-1},  (T^l)^{-1},\nonumber\\
	 (\Gamma^{\lambda})^{1/2}, (\Gamma^{\eta})^{1/2}, (\Gamma^{\eta})^{1/2}, (\Gamma^{\tilde \phi})^{1/2} \bigg) \nonumber
\end{align}

Note that the projection operation $H$ has an effect only on $(P^g; P^l)$ and Lemma \ref{lemma:bounded.2} indicates that $w(t )\in  S$ for all $t>0$, justifying the equivalence of \eqref{eq:closedloop.2} and (\ref{eq:model.4}).

A point $w^*\in S$ is an equilibrium of the closed-loop system (\ref{eq:model.4}) if and only if it is a fixed point of the projection
$H(w^*)=w^*$. 
Let $E_2:=\{\ w\in S\ |\ H(w(t))-w(t)=0\ \}$ be the set of equilibrium points. Then we have the following theorem.

\begin{theorem}
	\label{thm:stability.22}
	Suppose A1, A2 and A3 hold.
	Starting from any initial point $w(0)$, 
	$w(t)$ remains in a bounded set for all $t$ 
	and $w(t)\rightarrow w^*$ as $t\rightarrow\infty$ for some equilibrium $w^*\in E_2$ that
	is optimal for problem (\ref{eq:opt.2}).
\end{theorem}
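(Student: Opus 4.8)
The plan is to adapt the primal-dual Lyapunov argument used for the per-node case in Part~I \cite{Wang:DistributedFrequency} (see also \cite{Feijer:Stability, cherukuri:asymptotic}) to the projected dynamical system \eqref{eq:model.4}, whose right-hand side is now discontinuous, replacing classical LaSalle by an invariance principle for a nonpathological Lyapunov function. First I would fix a reference point: by \textbf{A2} and Theorem~\ref{thm:5} a primal-dual optimal $w^{*}$ exists, it is an equilibrium in $E_{2}$, and by Theorem~\ref{thm:6} it has $\mu^{*}=\omega^{*}=0$. Take as Lyapunov candidate the weighted squared distance $V(w):=\tfrac12(w-w^{*})^{T}\Xi(w-w^{*})$, where $\Xi\succ0$ is the diagonal matrix of inverse loop gains (equivalently $\Xi=\Gamma_{2}^{-1}$ in the coordinates of \eqref{eq:model.4}). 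Since $V\in C^{1}$ it is nonpathological, and since Caratheodory solutions of \eqref{eq:closedloop.2} exist and are unique (as noted, via \cite{DupuisNagurney1993}), $t\mapsto V(w(t))$ is absolutely continuous with $\dot V(w(t))=\nabla V(w(t))^{T}\dot w(t)$ for a.e.\ $t$; on each region of constant $(\sigma^{+},\sigma^{-})$ the field $F$ in \eqref{eq:Fz.2} is $C^{1}$, so there is no ambiguity.

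The core estimate is $\dot V\le0$ along trajectories. Reading \eqref{eq:model.1}\eqref{eq:control.2} (with $\mu$ identified with $\omega$) as primal-descent/dual-ascent on the Lagrangian $L_{2}$ of \eqref{eq:defL.2}--\eqref{eq:defL.21}, I would convert the gradient inner products into Lagrangian differences using convexity of $L_{2}$ in the primal block $x=(\tilde\theta,\tilde\phi,\omega,P^{g},P^{l})$ and affinity (hence concavity) in the dual block $\rho=(\lambda,\mu,\eta^{+},\eta^{-})$; the box projection on $(P^{g},P^{l})$ and the $[\,\cdot\,]^{+}$ saturations on $(\eta^{+},\eta^{-})$ then contribute only nonpositive normal-cone terms, because $(P^{g}(t),P^{l}(t))\in X$ for all $t$ by Lemma~\ref{lemma:bounded.2} and $\eta^{+*},\eta^{-*}\ge0$. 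The saddle-point inequalities $L_{2}(x^{*};\rho)\le L_{2}(x^{*};\rho^{*})\le L_{2}(x;\rho^{*})$ then give $\dot V(w(t))\le L_{2}(x^{*};\rho(t))-L_{2}(x(t);\rho^{*})\le0$ for a.e.\ $t$; the discontinuity is harmless since this is a pointwise bound, valid wherever $\dot w$ equals the right-hand side.

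Monotone decrease of $V$ gives $w(t)\in\{V\le V(w(0))\}$, a bounded set, which is the first claim. For convergence I would invoke the invariance principle for nonpathological Lyapunov functions of discontinuous Caratheodory systems: $V(w(t))$ decreases to some $c$, and the positive limit set $\Omega$ of $w(\cdot)$ is nonempty, compact, contained in $S\cap V^{-1}(c)$, and invariant. Along any solution that stays in $\Omega$, $V\equiv c$ so $\dot V=0$ a.e., which by the previous inequality forces $L_{2}(x^{*};\rho(t))=L_{2}(x^{*};\rho^{*})=L_{2}(x(t);\rho^{*})$. Hence on such a solution $x(t)$ minimizes $L_{2}(\cdot;\rho^{*})$ over its domain and $\rho(t)$ maximizes $L_{2}(x^{*};\cdot)$; since the saddle set of the convex-concave $L_{2}$ is the product of these minimizing and maximizing sets, $(x(t),\rho(t))$ is itself a saddle point, i.e.\ primal-dual optimal, and by Theorem~\ref{thm:5} it is then an equilibrium, so it lies in $E_{2}$ and is stationary. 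Thus every point of $\Omega$ is a primal-dual optimal equilibrium. Re-running the construction with $w^{*}$ chosen in $\Omega$ (a legitimate choice, being primal-dual optimal), $V$ is constant on $\Omega$ and vanishes at $w^{*}$, so $\Omega=\{w^{*}\}$ and $w(t)\to w^{*}$, which by Theorem~\ref{thm:5} is optimal for \eqref{eq:opt.2}.

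I expect the main obstacle to be making the monotone-decrease estimate watertight across every saturation regime, i.e.\ getting the sign of each projection and $[\,\cdot\,]^{+}$ contribution to $\dot V$ right when the state lies on $\partial X$ or when some $\eta^{\pm}_{ij}=0$, together with legitimizing the invariance principle for the Caratheodory system that switches across the manifolds defined by $\sigma^{+}$ and $\sigma^{-}$ (well-posedness, invariance of $\Omega$, and constancy of $V$ on $\Omega$ all rely on this nonpathological machinery). If one instead characterizes $E_{2}$ directly, as in Part~I, rather than through the saddle-set argument, the delicate point becomes bootstrapping the non-strictly-convex coordinates $(\tilde\theta,\tilde\phi,\lambda,\eta^{\pm})$ from the strictly-convex ones $(\omega,z,P^{g},P^{l})$ via \eqref{eq:control.2a}--\eqref{eq:control.2d} and the swing and generation equations, using that $\tilde\phi$ is confined to $\range(C^{T})$.
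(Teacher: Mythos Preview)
Your approach differs from the paper's in the choice of Lyapunov function: you use only the quadratic distance $V(w)=\tfrac12(w-w^{*})^{T}\Xi(w-w^{*})$, whereas the paper augments this quadratic (scaled by a small $k$) with a Fukushima-type gap term $-(H(w)-w)^{T}F(w)-\tfrac12\|H(w)-w\|^{2}$, producing a Lyapunov function that is itself discontinuous across the switching surfaces of $\sigma^{\pm}$ and hence genuinely requires the nonpathological machinery. Your $\dot V\le 0$ estimate via the saddle inequalities is correct and coincides with one of the four pieces in the paper's bound.

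The genuine gap is in your invariant-set argument. The claim that ``the saddle set of the convex--concave $L_{2}$ is the product of these minimizing and maximizing sets'' is false as stated: the saddle set is a product $X^{*}\times P^{*}$, but $X^{*}$ is in general a \emph{strict} subset of $\arg\min_{x}L_{2}(x;\rho^{*})$. A two-line counterexample is $L(x,\rho)=-\rho x$ on $\mathbb R\times\mathbb R_{\ge0}$ (the Lagrangian of $\min 0$ s.t.\ $x\ge0$): the saddle set is $[0,\infty)\times\{0\}$, yet $\arg\min_{x}L(x,0)=\mathbb R$, so $(-1,0)$ lies in your ``product'' but is not a saddle. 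This bites in your setting because, after identifying $\mu\equiv\omega$ and working with $\hat L_{2}$, the function $\hat L_{2}(\cdot,w_{2}^{*})$ (with $\omega^{*}=0$) does not depend on $\tilde\theta$ at all; hence $\arg\min_{w_{1}}\hat L_{2}(\cdot,w_{2}^{*})$ imposes no restriction on $\tilde\theta$, whereas membership in the saddle set does (via $\nabla_{\omega}\hat L_{2}=0$). So the two one-sided conditions do not let you conclude that $(x(t),\rho(t))$ is a saddle, and your appeal to Theorem~\ref{thm:5} is not justified.

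The paper closes this hole differently: the extra gap-function terms yield the sharper decrease
\[
\dot V_{2}\ \le\ -\dot\omega^{T}D\dot\omega-(\dot P^{g})^{T}A^{g}\dot P^{g}-(\dot P^{l})^{T}A^{l}\dot P^{l}-\|CB\dot{\tilde\phi}-\dot P^{g}+\dot P^{l}\|^{2},
\]
which on the invariant set forces $\dot\omega=\dot P^{g}=\dot P^{l}=0$ directly. The paper then bootstraps: the identity $\hat L_{2}(w_{1}(t),w_{2}^{*})\equiv\hat L_{2}(w_{1}^{*},w_{2}^{*})$ (from the saddle term vanishing) is differentiated in $t$ to obtain $\dot{\tilde\phi}=0$, and finally $\tilde\theta,\lambda,\eta^{\pm}$ are shown constant by substituting back into \eqref{eq:closedloop.2} and using boundedness. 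Your quadratic $V$ can be made to work, but only by abandoning the saddle-product shortcut and carrying out this bootstrapping explicitly (the ``alternative'' you mention at the end): strict convexity/concavity of $\hat L_{2}$ pins down $\omega,P^{g},P^{l}$ on the invariant set, and the remaining coordinates are recovered from the dynamics exactly as the paper does.
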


For any equilibrium point $w^*$, we define the following  function taking the same form as the per-node case.
\begin{align}
	\label{eq:lyapunov.2}
	\tilde{V}_2(w)&=-(H(w)-w)^TF(w)
	-\frac{1}{2} ||H(w)-w||^2_2 \nonumber\\
	&\quad +\frac{1}{2}k(w-w^*)^T\Gamma_2^{-2}(w-w^*) 
\end{align}
where   $k$ is small enough such that $\Gamma_2-k\Gamma_2^{-1} > 0$ is strictly positive definite. 

For any fixed $\sigma^+$ and $\sigma^-$, $\tilde{V}_2$ is continuously differentiable as $F(w)$ is continuously differentiable in this situation. Similar to $V_1(w)$ used in Part I of the paper, we know $ \tilde{V}_2(w) \geq 0$ on $S$ and $ \tilde{V}_2(w)=0$ holds only at any equilibrium $w^* = H(w^*)$\cite{Fukushima:Equivalent}. Moreover, $\tilde{V}_2$ is nonincreasing for fixed $\sigma^+$ and $ \sigma^-$, as we prove in  Appendix \ref{appd.thm8}. 

It is worth to note that the index sets $\sigma^+$ and $\sigma^-$ may change sometimes, resulting in discontinuity of $\tilde{V}_2(w)$. To circumvent such an issue, we slightly modify the definition of $V_2(w)$ at the discontinuous points as:
\begin{enumerate}
	\item $V_2(w) := \tilde{V}_2(w)$, if $\tilde{V}_2(w)$ is continuous at $w$;
	\item $V_2(w) := \limsup\limits_{v\to w} \tilde{V}_2(v)$, if  $\tilde{V}_2(w)$ is discontinuous at $w$.
\end{enumerate}
Then $V_2(w)$ is upper semi-continuous in $w$, and  $ V_2(w) \geq 0$ on $S$ and $ V_2(w)=0$ holds only at any equilibrium $w^* = H(w^*)$. 
As  $V_2(w)$ is not differentiable for $w$ at discontinuous points, we use the Clarke gradient as the gradient at these points \cite{clarke:optimization}.

%

Note that $\tilde{V}_2$ is continuous almost everywhere except the switching points. Hence both $V_2(w)$ is \emph{nonpathological} \cite{Bacciotti:Nonpathological,bacciotti:stability}. With these definitions and notations above, we can prove  Theorem \ref{thm:stability.22}. The detail of proof is provided in  Appendix \ref{appd.thm8}.

\section{Case studies}
\subsection{System configuration}
A four-area system based on Kundur's four-machine, two-area system \cite{Fang:Design} \cite{Kundur:Power} is used to test our optimal frequency controller. There are one (aggregate) generator (Gen1$\sim$Gen4), one controllable (aggregate) load (L1c$\sim$L4c) and one uncontrollable (aggregate) load (L1$\sim$L4) in each area, which is shown in Fig.\ref{fig:system}. The parameters of generators and controllable loads are given in Table \ref{tab:SysPara}. The total uncontrollable load in each area are identically 480MW. At time $t=10s$, we add step changes on the uncontrollable loads in four areas to test the performance of our controllers. 

All the simulations are implemented in PSCAD \cite{website:PSCAD} with 8GB memory and 2.39 GHz CPU.  We use the detailed electromagnetic transient model of three-phase synchronous machines to simulate generators with both governors and exciters.  The uncontrollable load L1-L4 are modelled by the fixed load in PSCAD, while controllable load L1c-L4c are formulated by the self-defined controlled current source. The closed-loop system diagram is shown in Fig.\ref{fig:control2}. We need measure loacal frequency, generation, controllable load and tie-line power flows to compute control demands. Only $\tilde{\phi}_{ij}$ are exchanged between neighbors. All variables are added by their initial steady state values to explicitly show the actual values.
\begin{figure}[htp]
        \centering
        \includegraphics[width=0.45\textwidth]{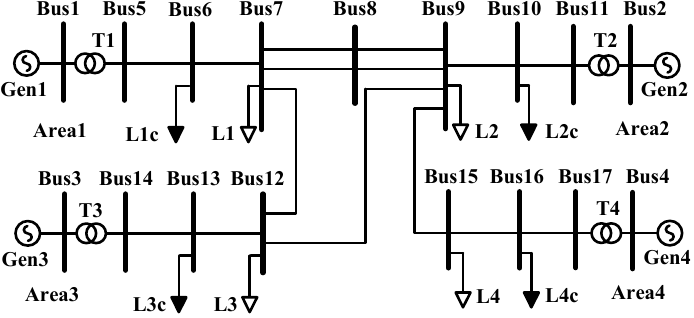}
        \caption{Four-area power system}
        \label{fig:system}
\end{figure}

\begin{figure}[htp]
	\centering
	\includegraphics[width=0.4\textwidth]{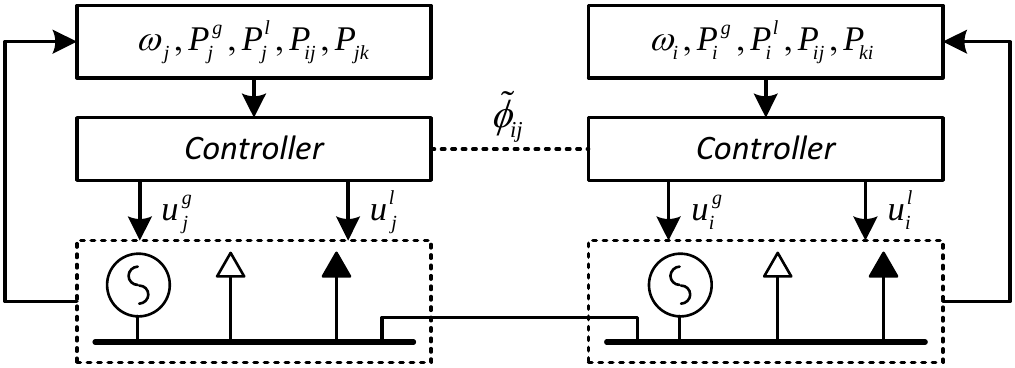}
	\caption{Closed-loop system diagram}
	\label{fig:control2}
\end{figure}

\begin{table}[http]
        \centering
        \caption{\\ \textsc{System parameters}}
        \label{tab:SysPara}
        \begin{tabular}{c c c c c c c}
                \hline 
                Area $j$ & $D_j$ & $R_j$ & $\alpha_j$ & $\beta_j$ &$T^g_j$ &$T^l_j$\\
                \hline
                1 & 0.04  & 0.04  & 2    & 2.5  & 4    & 4 \\
                2 & 0.045  & 0.06  & 2.5  & 4    & 6    & 5 \\
                3 & 0.05  & 0.05  & 1.5  & 2.5  & 5    & 4 \\
                4 & 0.055  & 0.045 & 3    & 3    & 5.5  & 5 \\
                \hline
        \end{tabular}
\end{table}

\subsection {Network Power Balance Case}

In this case, the generations in each area are initiated as (560.9, 548.7, 581.2, 540.6) MW and the controllable loads (70.8, 89.6, 71.3, 79.4) MW. The load changes are identical to those in Table \ref{tab:DistConstraintsnet}, which are also unknown to the controllers. We  use method in Remark 4 to estimate the load changes. Operational constraints on generations, controllable loads and tie lines are shown in Table \ref{tab:DistConstraintsnet}. 

\begin{table}[htb]
	\centering
	\caption{\\ \textsc {Capacity limits in network case}}
	\label{tab:DistConstraintsnet}
	\begin{tabular}{c c c c c}
		\hline 
		& Area 1 & Area 2 & Area 3& Area 4\\
		\hline
		[$\underline{P}^g_j$,$\overline{P}^g_j$] (MW) & [550, 710] & [530, 680] & [550, 700]& [530, 670]\\
		\hline
		[$\underline{P}^l_j$,$\overline{P}^l_j$] (MW) & [20, 80] & [60, 100] & [20, 80] & [35, 80]\\
		\hline
		\hline
		Tie line & (2,1) & (3,1) & (3,2) & (4,2)\\
		\hline
		[$\underline P_{ij}$,$\overline P_{ij}$] (MW) & [-65, 65] & [-65, 65] & [-65, 65] & [-65, 65]\\
		\hline
	\end{tabular}
\end{table}

\subsubsection {Stability and optimality}
The dynamics of local frequencies  and tie-line power flows are illustrated in  Fig.\ref{fig:stabnet}. The frequencies are well restored in all four control areas while the tie line powers are remained within their acceptable ranges. The generations and controllable loads are different from that before disturbance, indicating that the system is stabilized at a new steady state. The resulting equilibrium point is  given in Table IV, which is identical to the optimal solution of  \eqref{eq:opt.2} computed by centralized optimization using CVX. These simulation results confirm that our controller can autonomously guarantee the frequency stability while achieving optimal operating point in the overall system. 

\begin{figure}[htp]
	\centering
	\includegraphics[width=0.5\textwidth]{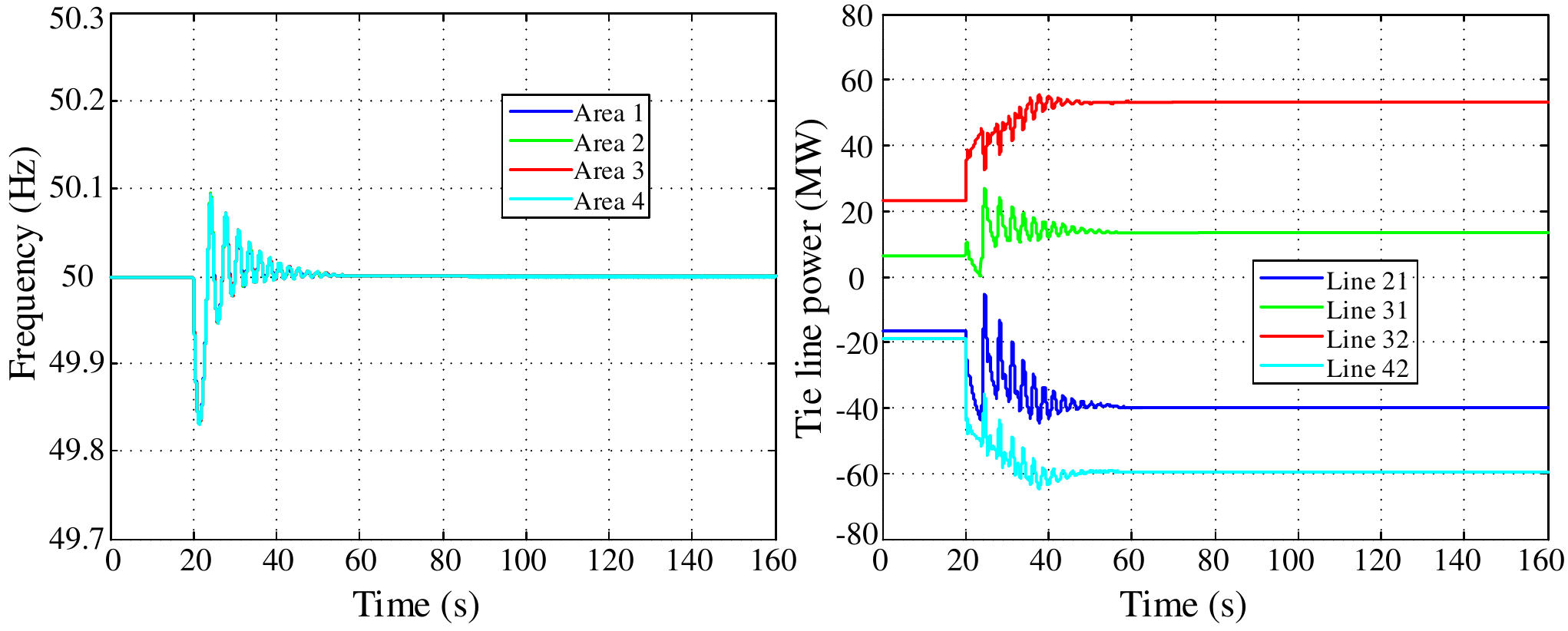}
	\caption{Dynamics of frequency (left) and tie-line flows (right) in network balance case}
	\label{fig:stabnet}
\end{figure}

\begin{table}[htb]
	\centering
	\caption{\\ \textsc {Equilibrium Points}}
	\label{tab:eps}
	\begin{tabular}{c c c c c}
		\hline 
		& Area 1 & Area 2 & Area 3& Area 4\\
		\hline
		$P^{g*}_j$ (MW)  & 620  & 596  & 660  & 580\\
		$P^{l*}_j$ (MW)  & 23.6 & 59.8 & 23.6 & 39.7\\
		\hline
		\hline
		Tie line      & (2,1) & (3,1) & (3,2) & (4,2)\\
		\hline
		$P_{ij}^*$ (MW) & -39.94 & 13.35 & 53.27 & -59.6\\
		\hline
	\end{tabular}
\end{table}

\subsubsection{Dynamic performance}
In this subsection, we analyze the impacts of operational (capacity and  line power) constraints on the dynamic property. Similarly, we compare the  responses of  frequency controllers with and without considering input saturations. The trajectories of mechanical power of turbines and controllable loads are shown in Fig.\ref{fig:dynamicnet.mec} and Fig.\ref{fig:dynamicnet.2}, respectively. In this case, the system frequency is restored, and the same optimal equilibrium point is achieved. 

\begin{figure}[htp]
	\centering
	\includegraphics[width=0.5\textwidth]{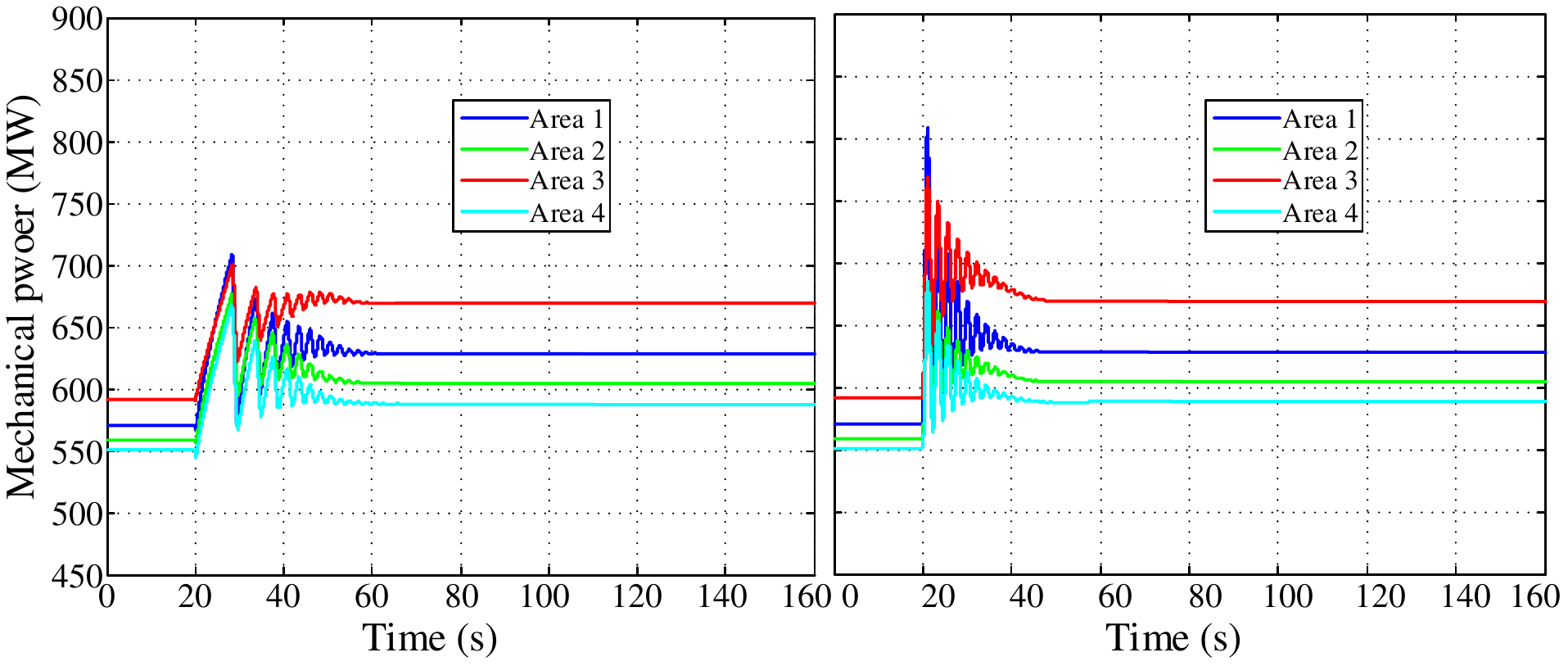}
	\caption{Mechanical outputs with(left)/without(right) capacity constraints}
	\label{fig:dynamicnet.mec}
\end{figure}


\begin{figure}[htp]
	\centering
	\includegraphics[width=0.5\textwidth]{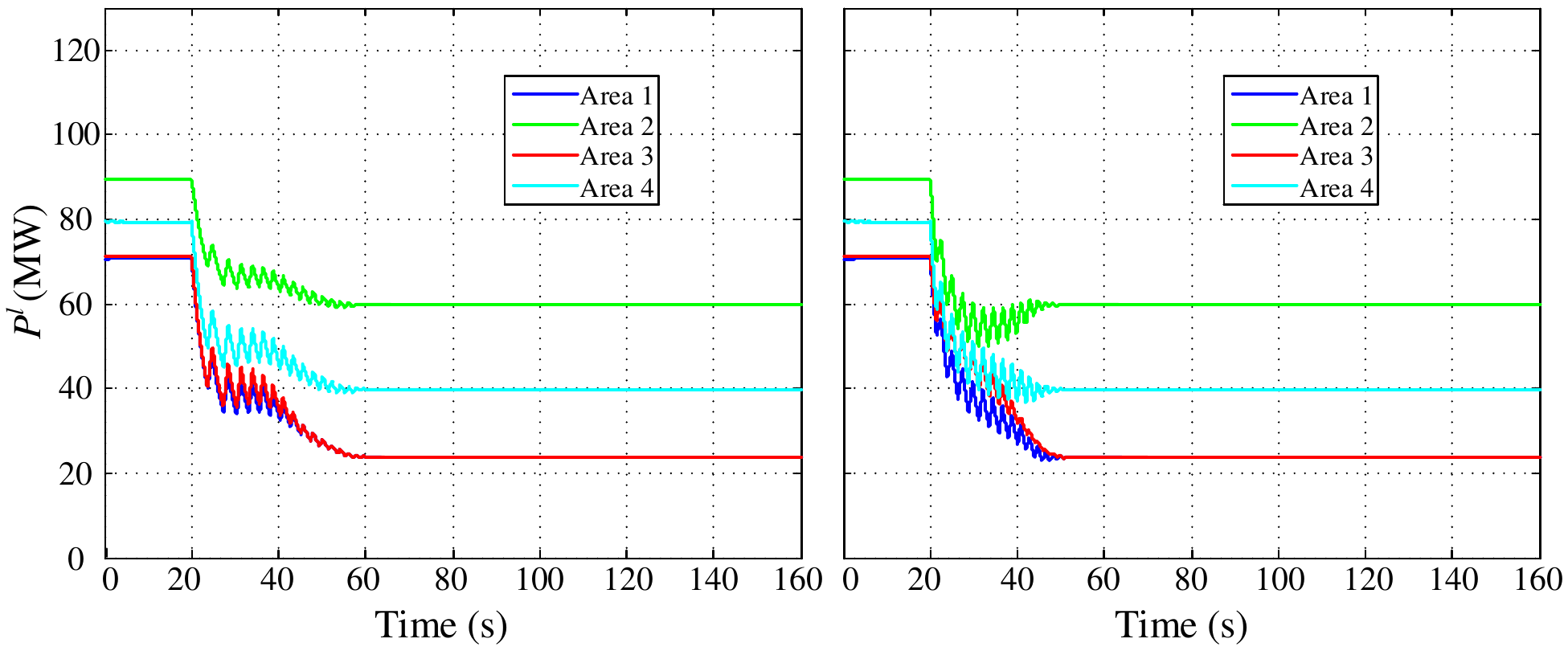}
	\caption{Controllable loads with(left)/without(right) capacity constraints}
	\label{fig:dynamicnet.2}
\end{figure}

\subsubsection {Congestion analysis}
In this scenario, we reduce tie-line power constraints to $\overline{P}_{ij}=-\underline{P}_{ij}=50$MW, which causes congestions in tie-line (2,3) and  (2,4). The steady states under the distributed control are listed in Table \ref{tab:epcong}. Note that $P_j^{g*} - {p_j} + P_j^{l*} - \sum\limits_{k:j \to k} {{P_{jk}^*}}  + \sum\limits_{i:i \to j} {{P_{ij}^*}}  = 0$ hold in each area. 

The dynamics of tie-line powers in two different scenarios shown in Fig.\ref{fig:congestion} indicate that (2,4) reaches the limit in steady state. However, by adopting the proposed fully distributed optimal frequency control, the congestion is eliminated and all the tie line powers remain within the limits.  Thus, congestion control is achieved optimally in a distributed manner.

\begin{table}[htb]
	\centering
	\caption{\\ \textsc {Simulation Results with Congestion}}
	\label{tab:epcong}
	\begin{tabular}{c c c c c}
		\hline 
		& Area 1 & Area 2 & Area 3& Area 4\\
		\hline
		$P^{g*}_j$ (MW) & 618   & 595  & 658  & 585\\
		$P^{l*}_j$ (MW) & 25.1  & 60.7 & 25.1 & 34.9\\
		\hline
		\hline
		Tie line  & (2,1) & (3,1) & (3,2) & (4,2)\\
		\hline
		$P_{ij}^*$ (MW) & -36.4 & 13.1 & 49.5 & -49.9\\
		\hline
	\end{tabular}
\end{table}

\begin{figure}[htp]
	\centering
	\includegraphics[width=0.5\textwidth]{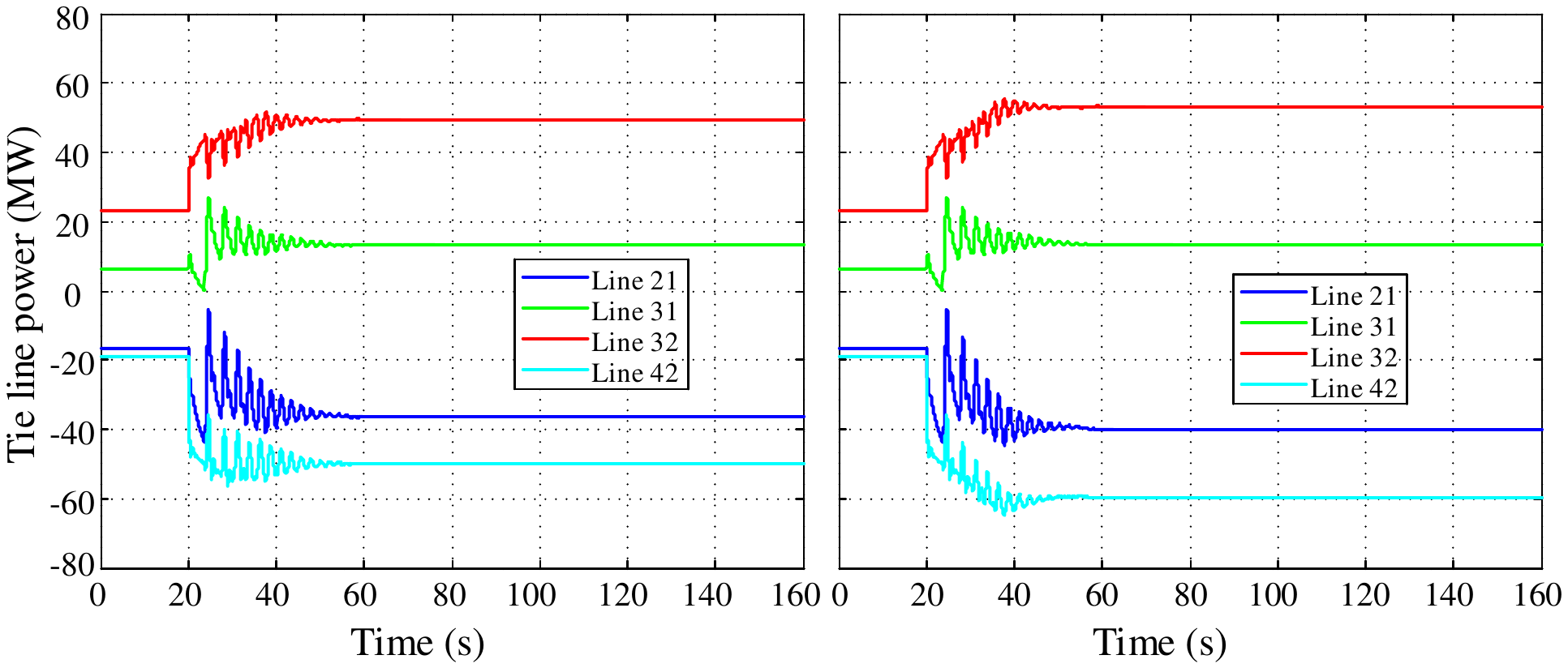}
	\caption{Tie line power with(left)/without(right) capacity constraints}
	\label{fig:congestion}
\end{figure}

\section{Concluding Remarks}
In this paper, we have devised a distributed optimal frequency control in the network balance case, which can autonomously restore the nominal frequencies after unknown load disturbances while minimizing the regulation costs. The capacity constraints on the generations and  controllable loads can also be satisfied even during transient. In addition, congestions can be eliminated automatically, implying tie-line powers can be remained within given ranges. Only neighborhood communication is required in this case. Like the per-node case, here the closed-loop system again carries out a primal-dual algorithm with saturation to solve the associated optimal problem. To cope with the discontinuity introduced due to enforcing different types of constraints, we have constructed a nonpathological Lyapunov function to prove the asymptotically stability of the closed-loop systems. Simulations on a modified Kundur's power system validate the effectiveness of our controller. 

This approach is also applicable to other problem involving frequency regulation, e.g. standalone microgrid or demand side management. We highlight two crucial implications of our work: First, our distributed frequency control is capable of serving as an automatically corrective re-dispatch without the coordination of dispatch center when certain congestion happens; Second, the feasible region of economic decisions can be enlarged  benefiting from the corrective re-dispatch. 
In this sense, our work may provide a systematic way to bridge the gap between  the (secondary) frequency control in a fast timescale  and the economic dispatch in a slow timescale, hence breaking the traditional hierarchy of the power system frequency control and economic dispatch.

\bibliographystyle{IEEEtran}
\bibliography{mybib,PowerRef-201202}

\begin{thebibliography}{10}
\providecommand{\url}[1]{#1}
\csname url@samestyle\endcsname
\providecommand{\newblock}{\relax}
\providecommand{\bibinfo}[2]{#2}
\providecommand{\BIBentrySTDinterwordspacing}{\spaceskip=0pt\relax}
\providecommand{\BIBentryALTinterwordstretchfactor}{4}
\providecommand{\BIBentryALTinterwordspacing}{\spaceskip=\fontdimen2\font plus
\BIBentryALTinterwordstretchfactor\fontdimen3\font minus
  \fontdimen4\font\relax}
\providecommand{\BIBforeignlanguage}[2]{{%
\expandafter\ifx\csname l@#1\endcsname\relax
\typeout{** WARNING: IEEEtran.bst: No hyphenation pattern has been}%
\typeout{** loaded for the language `#1'. Using the pattern for}%
\typeout{** the default language instead.}%
\else
\language=\csname l@#1\endcsname
\fi
#2}}
\providecommand{\BIBdecl}{\relax}
\BIBdecl

\bibitem{Wang:DistributedFrequency}
Z.~Wang, F.~Liu, S.~H. Low, C.~Zhao, and S.~Mei, ``Distributed frequency
  control with operational constraints, part i: Per-node power balance,''
  \emph{https://arxiv.org/abs/1702.07965}, 2017.

\bibitem{Mallada-2017-OLC-TAC}
E.~Mallada, C.~Zhao, and S.~H. Low, ``Optimal load-side control for frequency
  regulation in smart grids,'' \emph{IEEE Trans. Autom. Control, to appear},
  2017.

\bibitem{Stegink:aunifying}
T.~Stegink, C.~D. Persis, and A.~van~der Schaft, ``A unifying energy-based
  approach to stability of power grids with market dynamics,'' \emph{arXiv
  preprint arXiv:1604.05200}, 2016.

\bibitem{Stegink:aport}
T.~Stegink, C.~D. Persis, and A.~van~der Schaft, ``A port-hamiltonian approach to optimal frequency regulation in power
  grids,'' in \emph{Proc. 54th IEEE Conference on Decision and Control (CDC)},
  Dec 2015, pp. 3224--3229.

\bibitem{Yi:Distributed}
P.~Yi, Y.~Hong, and F.~Liu, ``Distributed gradient algorithm for constrained
  optimization with application to load sharing in power systems,'' \emph{Syst.
  Control Lett.}, vol.~83, pp. 45--52, Sept 2015.

\bibitem{zhao:aunified}
C.~Zhao, E.~Mallada, S.~Low, and J.~Bialek, ``A unified framework for frequency
  control and congestion management,'' in \emph{Power Systems Computation
  Conference (PSCC), 2016}.\hskip 1em plus 0.5em minus 0.4em\relax IEEE, 2016,
  pp. 1--7.

\bibitem{zhang:real}
X.~Zhang and A.~Papachristodoulou, ``A real-time control framework for smart
  power networks: Design methodology and stability,'' \emph{Automatica},
  vol.~58, pp. 43--50, 2015.

\bibitem{Feijer:Stability}
D.~Feijer and F.~Paganini., ``Stability of primal-dual gradient dynamics and
  applications to network optimization,'' \emph{Automatica}, vol.~46, no.~12,
  pp. 1974--1981, Dec 2010.

\bibitem{cherukuri:asymptotic}
A.~Cherukuri, E.~Mallada, and J.~Cortés, ``Asymptotic convergence of
  constrained primal\-dual dynamics,'' \emph{Syst. Control Lett.}, vol.~87, pp.
  10 -- 15, 2016.

\bibitem{DupuisNagurney1993}
P.~Dupuis and A.~Nagurney, ``Dynamical systems and variational inequalities,''
  \emph{Annals of Operations Research}, vol.~44, pp. 9--42, 1993.

\bibitem{Monica:Existence}
M.-G. Cojocaru and L.~B. Jonker, ``Existence of solutions to projected
  differential equations in hilbert spaces,'' \emph{Proc. Amer. Math. Soc.},
  vol. 132, pp. 183--193, 2004.

\bibitem{Fukushima:Equivalent}
M.~Fukushima, ``Equivalent differentiable optimization problems and descent
  methods for asymmetric variational inequality problems,'' \emph{Math.
  programming}, vol.~53, no. 1-3, pp. 99--110, Jan 1992.

\bibitem{clarke:optimization}
F.~H. Clarke, \emph{Optimization and nonsmooth analysis}.\hskip 1em plus 0.5em
  minus 0.4em\relax Siam, 1990, vol.~5.

\bibitem{Bacciotti:Nonpathological}
A.~Bacciotti and F.~Ceragioli, ``Nonpathological lyapunov functions and
  discontinuous carathéodory systems,'' \emph{Automatica}, vol.~42, no.~3, pp.
  453 -- 458, 2006.

\bibitem{bacciotti:stability}
A.~Bacciotti and F.~Ceragioli, ``Stability and stabilization of discontinuous systems and nonsmooth
  lyapunov functions,'' \emph{ESAIM: Control, Optimisation and Calculus of
  Variations}, vol.~4, pp. 361--376, 1999.

\bibitem{Fang:Design}
J.~Fang, W.~Yao, Z.~Chen, J.~Wen, and S.~Cheng, ``Design of anti-windup
  compensator for energy storage-based damping controller to enhance power
  system stability,'' \emph{IEEE Trans. Power Syst.}, vol.~29, no.~3, pp.
  1175--1185, May 2014.

\bibitem{Kundur:Power}
P.~Kundur, \emph{Power System Stability and Control}.\hskip 1em plus 0.5em
  minus 0.4em\relax McGraw-hill New York, 1994, vol.~7.

\bibitem{website:PSCAD}
\url{https://hvdc.ca/pscad/}.

\end{thebibliography}

\newpage
\appendices

\makeatletter
\@addtoreset{equation}{section}
\@addtoreset{theorem}{section}
\makeatother

\section{Proofs of Theorem \ref{thm:5} and Theorem \ref{thm:6}}

\renewcommand{\theequation}{A.\arabic{equation}}
\renewcommand{\thetheorem}{A.\arabic{theorem}}

We start with a lemma.
\begin{lemma}
	\label{lemma:6}
	Suppose $(x^*, u^*)$ is optimal for \eqref{eq:opt.2}.  Then
	\bee
	\item  $\omega^* = 0$, i.e., the nominal frequency is restored; 
	\item  the network balance condition \eqref{eq:balance.net} is satisfied by $x^*$;
	\item  $\phi^*-\theta^* = (\phi^*_0-\theta^*_0)\,\textbf{1}$;
	\item  $\underline{\theta}_{ij} \le \theta^*_{ij} \le \overline{\theta}_{ij}$, 
		i.e., the line limits \eqref{eq:lineConstraint} are satisfied.
	\eee
\end{lemma}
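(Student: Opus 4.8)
The plan is to exploit the observation (noted in Remark 1) that constraint \eqref{eq:opt.2b} forces $z_j \equiv 0$ on the feasible set of \eqref{eq:opt.2}, so that at any feasible point the objective \eqref{eq:opt.2o} reduces to $\sum_j \tfrac{\alpha_j}{2}(P^g_j)^2 + \sum_j \tfrac{\beta_j}{2}(P^l_j)^2 + \sum_j \tfrac{D_j}{2}\omega_j^2$; the core of the lemma is then an exchange argument that kills the frequency term. Claim~2 requires no optimality, only feasibility: summing \eqref{eq:opt.2b} over $j\in N$ and using $\mathbf{1}^T C = 0$ (each column of the incidence matrix $C$ sums to zero), one gets $\mathbf{1}^T CBC^T\phi^* = 0$, hence $\sum_j P^{g*}_j = \sum_j (P^{l*}_j + p_j)$, which is \eqref{eq:balance.net}. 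This is exactly the computation already indicated in Remark 2.

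For Claim~1 I would run an exchange argument. Starting from an optimal $(\theta^*, \phi^*, \omega^*, P^{g*}, P^{l*})$ with controls $(u^{g*}, u^{l*})$, consider the candidate point obtained by replacing $\omega^*$ with $0$ and $\theta^*$ with $\phi^*$ (subtracting $\phi^*_0\mathbf{1}$ first if $\theta$ is normalized to a reference node, which does not change $CBC^T\phi^*$) and keeping everything else fixed. Constraints \eqref{eq:OpConstraints.1}, \eqref{eq:opt.2b}, \eqref{eq:opt.2c}, \eqref{eq:opt.2d}, \eqref{eq:opt.2e} are untouched, and with $\omega = 0$ constraint \eqref{eq:opt.2a} now reads $P^{g*}_j - P^{l*}_j - p_j = (CBC^T\phi^*)_j$, which is precisely \eqref{eq:opt.2b}; hence the candidate is feasible. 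Its cost equals the optimal cost minus $\sum_j \tfrac{D_j}{2}(\omega^*_j)^2$, so optimality forces $\sum_j \tfrac{D_j}{2}(\omega^*_j)^2 \le 0$, and since $D_j > 0$ this gives $\omega^*_j = 0$ for all $j$. I expect this exchange step to be the main obstacle: one must verify carefully that the candidate satisfies \emph{every} constraint and that the cost comparison is tight; the rest of the proof is linear algebra.

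Claims~3 and~4 then follow from the structure of the weighted Laplacian $CBC^T$. With $\omega^* = 0$, subtracting \eqref{eq:opt.2b} from \eqref{eq:opt.2a} yields $CBC^T(\phi^* - \theta^*) = 0$; since the graph $G$ is connected, $\ker(CBC^T) = \mathrm{span}\{\mathbf{1}\}$, so $\phi^* - \theta^* = c\,\mathbf{1}$ for a scalar $c$, and reading off the reference component gives $c = \phi^*_0 - \theta^*_0$, which is Claim~3. Consequently $\theta^*_i - \theta^*_j = \phi^*_i - \phi^*_j$ on every edge, so constraint \eqref{eq:opt.2c} transfers verbatim to $\theta^*$, giving $\underline{\theta}_{ij} \le \theta^*_i - \theta^*_j \le \overline{\theta}_{ij}$, i.e. the line limits \eqref{eq:lineConstraint} (Claim~4). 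Thus the only genuinely non-routine ingredient is the exchange construction used for Claim~1; everything else is bookkeeping with the incidence matrix and the null space of $CBC^T$.
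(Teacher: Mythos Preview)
Your proposal is correct and follows essentially the same argument as the paper: the exchange construction $(\theta,\omega)\mapsto(\phi^*,0)$ to force $\omega^*=0$, summing \eqref{eq:opt.2b} against $\mathbf{1}$ for network balance, and the Laplacian null-space computation for Claims~3 and~4 are exactly the steps the paper uses, in the same order of dependence.
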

\begin{proof}
	Suppose $(x^*, u^*)$ is optimal but $\omega^*\neq 0$.
	Then \eqref{eq:opt.2b} implies
	\bq
	P^{g*} - P^{l*} - p & = &  CBC^T\phi^*
	\label{eq:lemma.6.1}
	\eq
	Consider $\hat x := (\hat\theta, \phi^*, \hat\omega, P^{g*}, P^{l*})$ with
	$\hat\theta:=\phi^*$ and $\hat\omega:=0$.  Then $\hat x$ satisfies
	\eqref{eq:opt.2a}\eqref{eq:opt.2b} due to \eqref{eq:lemma.6.1}.  
	Hence $(\hat x, u^*)$ is feasible for \eqref{eq:opt.2} but has a strictly lower
	cost, contradicting the optimality of $(x^*, u^*)$.  Hence $\omega^*=0$.
	
	Multiplier both sides of \eqref{eq:lemma.6.1} by $\textbf{1}^T$ yields
	the network balance condition \eqref{eq:balance.net}, proving 2).
		
	To prove 3), setting $\omega^*=0$ in \eqref{eq:opt.2a} and combining
	with \eqref{eq:opt.2b} yield
	\bqn
	CBC^T\theta^* &=& P^{g*}-P^{l*}-p \ \ = \ \ CBC^T\phi^*
	\eqn
	Since $CBC^T$ is 
	an $(n+1)\times (n+1)$ matrix with rank $n$, its null space has dimension 1
	and is spanned by $\textbf{1}$ because $C^T\textbf{1} = 0$.
	Hence $CBC^T(\phi^*-\theta^*)=0$ implies that 
	$\phi^*-\theta^* = (\phi^*_0-\theta^*_0)\,\textbf{1}$. 
	To prove 4), note that $\tilde \phi=C^T\phi$ and $\tilde \theta=C^T\theta$
	and hence 
	\bqn
	\tilde{\phi}^*-\tilde{\theta}^* & = & C^T({\phi^*}-{\theta^*})
			\ \ = \ \ (\phi^*_0-\theta^*_0)C^T\textbf{1}  \ \ = \ \ 0
	\eqn
	i.e. $\tilde{\phi}^*=\tilde{\theta}^*$.   We conclude from \eqref{eq:opt.2c} that
	$\underline{\theta}_{ij} \le \theta^*_{ij} \le \overline{\theta}_{ij}$. 
	This completes 	the proof.
\end{proof}

We have the following result.
\begin{lemma}
	\label{lemma:7}
	Suppose $(x^*, \rho^*)$ is primal-dual optimal.  Then
	\bqn
	u^{g*}_j \ = \
	P^{g*}_j & = & \left[ 
	P^{g*}_j - \gamma^g_j \left( \alpha_j P^{g*}_j + \omega^*_j + z^*_j+\lambda_j^* \right)
	\right]_{\underline P^g_j}^{\overline P^g_j}
	\nonumber \\
	u^{l*}_j \ = \
	P^{l*}_j & = & \left[ 
	P^{l*}_j - \gamma^l_j \left( \beta_j P^{l*}_j - \omega^*_j -z^*_j- \lambda_j^* \right)
	\right]_{\underline P^l_j}^{\overline P^l_j}
	\eqn
	for any $\gamma^g_j>0$ and $\gamma^l_j>0$.
\end{lemma}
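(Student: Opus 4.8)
The plan is to read the two fixed-point identities directly off the Karush--Kuhn--Tucker (saddle-point) conditions of NBO \eqref{eq:opt.2}. Since \eqref{eq:opt.2} has a convex quadratic cost and affine constraints, strong duality holds, so $(x^*,u^*)$ being primal optimal and $\rho^*$ dual optimal is equivalent to $(x^*,\rho^*)$ being a saddle point of the Lagrangian $L_2$ in \eqref{eq:defL.21}: $x^*$ minimizes $L_2(\cdot;\rho^*)$ over the primal feasible set in which $(P^g,P^l)$ is confined to the box $X$ of \eqref{eq:defX} and $\tilde\theta,\tilde\phi,\omega$ are free, while $\rho_1=(\lambda,\mu)$ is free and $\rho_2=(\eta^+,\eta^-)\ge 0$. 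First I would record the identification $\mu^*=\omega^*$: since $\omega$ enters $L_2$ only through $\tfrac12\omega^TD\omega-\mu^TD\omega$ and is unconstrained, stationarity in $\omega$ gives $D\omega^*=D\mu^*$, hence $\mu^*=\omega^*$ because $D$ is diagonal positive definite. This is exactly the identification used throughout Section \ref{subsec:design2}.

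Next I would specialize the primal minimization to the block $(P^g,P^l)$. A global minimizer over the product feasible set is a minimizer in each factor, so $(P^{g*},P^{l*})$ minimizes the (convex quadratic) function $L_2(\tilde\theta^*,\tilde\phi^*,\omega^*,\cdot,\cdot;\rho^*)$ over $X$. By the standard variational characterization of a constrained minimizer of a differentiable convex function, this holds if and only if
\bqn
(P^{g*},P^{l*}) &=& \mathrm{Proj}_X\!\left((P^{g*},P^{l*})-\Gamma\,\nabla_{(P^g,P^l)}L_2(x^*;\rho^*)\right)
\eqn
for any positive diagonal step-size matrix $\Gamma$ (the $\Gamma$-weighted projection). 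Take $\Gamma=\diag(\gamma^g_j,\gamma^l_j)$. Because $X$ is a product of intervals, this projection decouples across coordinates --- each coordinate is simply clipped to its own interval, i.e. projection onto $[\underline P^g_j,\overline P^g_j]$ is the saturation $[\,\cdot\,]_{\underline P^g_j}^{\overline P^g_j}$. Substituting the gradients $\nabla_{P^g_j}L_2=\alpha_j P^g_j+z_j+\lambda_j+\mu_j$ and $\nabla_{P^l_j}L_2=\beta_j P^l_j-z_j-\lambda_j-\mu_j$ evaluated at the optimum and replacing $\mu^*_j$ by $\omega^*_j$ turns this vector identity into exactly the two componentwise saturation equations asserted in the lemma.

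It remains to note $u^{g*}_j=P^{g*}_j$ and $u^{l*}_j=P^{l*}_j$, which is immediate: these are the feasibility constraints \eqref{eq:opt.2d}\eqref{eq:opt.2e} of NBO and hence hold at the optimal point (indeed they are the relations by which $u$ was eliminated in writing $L_2$). Combining this with the previous paragraph gives the displayed chain of equalities; if desired, Lemma \ref{lemma:6} (which gives $\omega^*=0$) can be invoked to confirm consistency with the control law \eqref{eq:control.2e}\eqref{eq:control.2f}, but it is not needed for the statement as written.

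I do not anticipate a genuine obstacle here; the lemma is essentially the slogan ``KKT over a box $=$ saturated projected-gradient fixed point.'' The only points needing a little care are (i) passing cleanly from the vector ($\Gamma$-weighted) projection onto $X$ to per-coordinate clipping with the per-node gains $\gamma^g_j,\gamma^l_j$, which relies on the product structure of $X$ and diagonality of $\Gamma$; and (ii) recording the $\mu^*=\omega^*$ identification from $\omega$-stationarity so that the gradients line up with the brackets in \eqref{eq:control.2e}\eqref{eq:control.2f}. Both are routine, so the proof will be short.
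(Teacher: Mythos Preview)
Your proposal is correct and follows essentially the same approach the paper intends: the paper actually omits the proof of this lemma (the manuscript contains a commented-out sketch that invokes strong duality, writes $x^*=\arg\min_x L_2(x;\rho^*)$ over the box constraints, and then defers to the analogous Lemma~2 in Part~I for the projected-gradient fixed-point characterization). Your write-up simply fills in those deferred details---the $\mu^*=\omega^*$ identification from $\omega$-stationarity, the gradient computation, and the decoupling of the box projection into per-coordinate clipping---all of which are routine and match the paper's framework.
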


Lemma \ref{lemma:7} shows that the saturation of control input does not impact the optimal solution of optimization problem \eqref{eq:opt.2}. 

With Lemma \ref{lemma:bounded.2}, Lemma \ref{lemma:6}
and Lemma \ref{lemma:7}, we now can prove Theorem \ref{thm:5} and Theorem \ref{thm:6}.
\begin{proof}[Proof of Theorem \ref{thm:5}]
	\noindent
	$\Rightarrow$: 
	Suppose $(x^*, \rho^*)$ is primal-dual optimal. 
	Then $x^*$ satisfies the operational constraints \eqref{eq:OpConstraints.1}.
	Moreover the right-hand side of \eqref{eq:model.1} vanishes because:
	\bi
	\item $\dot\theta = 0$ since $\omega^*=0$ from Lemma \ref{lemma:6}.
	\item $\dot\omega = 0$ since constraint \eqref{eq:opt.2a} holds for $x^*$.
	\item $\dot P^g =\dot P^l=0$ since $\omega^*= 0$ and $x^*$ 
	satisfies \eqref{eq:opt.2d} and \eqref{eq:opt.2e}.
	\item $\dot \lambda = 0$ since \eqref{eq:opt.2b} holds for $x^*$.
	\item $\dot \eta^+ = \dot \eta^- = 0$ since \eqref{eq:opt.2c} holds for $x^*$.
	\item From \eqref{eq:primal.2d} we have 
		\bqn
		\dot {\tilde{\phi}} & = & -\Gamma^{\tilde{\phi}} \ \nabla_{\tilde{\phi}} L_2 (x^*, \rho^*)
		\eqn
		Since $(x^*, \rho^*)$ is a saddle point,
		we must have $\nabla_{\tilde{\phi}} L_2 (x^*, \rho^*)=0$, implying
		$\dot {\tilde{\phi}}=0$.
	\ei 
	Hence $(x^*, \rho^*)$ is an equilibrium of the closed-loop system 
	\eqref{eq:model.1}\eqref{eq:control.2} that satisfies 
	the operational constraints \eqref{eq:OpConstraints.1}.   Moreover 
	$\mu^* = \omega^* = 0$ since
	$
	\frac{\partial L_2}{\partial \omega_j}(x^*, \rho^*)  = 
	D_j ( \omega^*_j - \mu^*_j) = 0
	$
	and $D_j>0$ for all $j\in N$.

	\vspace{0.07in}
	\noindent
	$\Leftarrow$: Suppose now $(x^*, \rho^*)$ is an equilibrium of the closed-loop 
	system \eqref{eq:model.1}\eqref{eq:control.2} that satisfies \eqref{eq:OpConstraints.1}
	and $\mu^*=0$.
	Since \eqref{eq:opt.2} is convex with linear constraints, 
	$(x^*, \rho^*)$ is a primal-dual optimal if and only if 
	$(x^*, u(x^*, \rho^*))$ is primal feasible and satisfies 
	\bq
	x^*& \!\!\!\!\!\!\!\! = \!\!\!\!\!\!\!\! &\arg \min_x \{L_2(x;\rho^*)|(x,u(x,\rho^*))
	\ \text{satisfies} \eqref{eq:OpConstraints.1},
	\eqref{eq:opt.2d},\eqref{eq:opt.2e}\}\nonumber\\  
	\label{eq:lamma.7}
	\eq
	This is because $\nabla_{\rho_1} L_2(x^*, \rho^*)=0$ since
	$\dot\mu = \dot\lambda = 0$, $\eta^{+*}\geq 0$, $\eta^{-*}\geq 0$,
	 and the complementary slackness condition
	$\eta^{+*}_{ij}(\tilde\phi_{ij}^*-\overline{\theta}_{ij})=0$,
	$\eta^{-*}_{ij}(\underline{\theta}_{ij}-\tilde\phi_{ij}^*)=0$ is satisfied
	since $\dot{\eta}^+ = \dot{\eta}^- =0$.
	
	To show that $(x^*, u(x^*, \rho^*))$ is primal feasible, note that since 
	$(x^*, u(x^*, \rho^*))$ is an equilibrium of \eqref{eq:model.1}, it satisfies 
	$\omega^*=0$ and hence \eqref{eq:opt.2d}\eqref{eq:opt.2e}, in
	addition to \eqref{eq:OpConstraints.1}.
	Moreover $\dot\omega=0$ means $(x^*, u(x^*, \rho^*))$ satisfies
	\eqref{eq:opt.2a}, $\dot \lambda = 0$ implies \eqref{eq:opt.2b}, 
	$\dot \eta^+ = \dot \eta^- = 0$ implies \eqref{eq:opt.2c}.
	
	To show that $(x^*, \rho^*)$ satisfies 
	\eqref{eq:lamma.7}, note that 
	\eqref{eq:opt.2d}\eqref{eq:opt.2e} and \eqref{eq:control.2e}\eqref{eq:control.2f}
	imply that
	\bqn
	P^{g*}_j & = & \left[ 
	P^{g*}_j - \gamma^g_j \left( \alpha_j P^{g*}_j + \omega^*_j +z^{*}_{_{j}} +\lambda_j^* \right)
	\right]_{\underline P^g_j}^{\overline P^g_j}
	\nonumber \\
	P^{l*}_j & = & \left[ 
	P^{l*}_j - \gamma^l_j \left( \beta_j P^{l*}_j - \omega^*_j -z^{*}_{_{j}} - \lambda_j^* \right)
	\right]_{\underline P^l_j}^{\overline P^l_j}
	\eqn
	The rest of the proof follows the same line of argument as that in Theorem 1 in \cite{Wang:DistributedFrequency}.
	This proves that $(x^*,\rho^*)$ is primal-dual optimal and completes the 
	proof of Theorem \ref{thm:5}.
\end{proof}

Next we prove Theorem \ref{thm:6}.
\begin{proof}[Proof of Theorem \ref{thm:6}]
	Let $(x^*, \rho^*) = (\tilde{\theta}^*$$, \tilde \phi^*,$ $
	\omega^*,$ $ P^{g*},$ $ P^{l*},$ $\lambda^*,$$ \eta^{+*},
	\eta^{-*}, \mu^*)$
	be primal-dual optimal. 
%
%
%
%
%
%
	For the uniqueness of $x^*$, $(\omega^*, P^{g*}, P^{l*})$ are unique because
	the objective function in \eqref{eq:opt.2} is strictly convex in $(\omega, P^{g}, P^{l})$.
	Hence $\mu^*= \omega^*$ is unique as well.
	Assumption A1 that $\phi^*_0:=0$ and \eqref{eq:lemma.6.1} imply that $\phi^*$ is 
	uniquely determined by the equilibrium $(\omega^*, P^{g*}, P^{l*})$.  
	Since $\theta^* - \phi^* = (\theta^*_0-\phi^*_0)\textbf{1}$, assumption A1 that 
	$\theta^*_0:=0$ then implies that $\theta^*$ is unique.
	This proves the uniqueness of $(x^*, \mu^*)$.
	
	The remaining three parts of the theorem follow from Lemma \ref{lemma:6}.
\end{proof}

\section{Proof Theorem \ref{thm:stability.22}}
\label{appd.thm8}
\renewcommand{\theequation}{B.\arabic{equation}}
\renewcommand{\thetheorem}{B.\arabic{theorem}}

\newcounter{TempEqCnt2}
\setcounter{TempEqCnt2}{\value{equation}}
\setcounter{equation}{1}
\newcounter{mytempeqncnt2}
    \begin{figure*}[!t]
    \normalsize
    \begin{equation}
        \label{eq:Q}
        Q=\\
        	\left[
        	\begin{array}{llllllll}
        		0  & -BC^T   & 0           & 0           & 0        & 0     & 0 & 0 \\
        		CB &   D     & -I_{|N|}    & I_{|N|}     & 0        & 0     & 0 & 0 \\
        		0  &I_{|N|}  & A^g+I_{|N|} & -I_{|N|}    & 0        & 0     & 0 & -CB \\
        		0  &-I_{|N|} & -I_{|N|}    & A^l+I_{|N|} & 0        & 0     & 0 & CB \\
        		0  &0        & -I_{|N|}    & I_{|N|}     & 0        & 0     & 0 & CB \\
        		0  &0        & 0           &   0         & 0        & 0     & 0 & -I_{|E|-|\sigma^+|} \\
        		0  &0        & 0           &   0         & 0        & 0     & 0 & I_{|E|-|\sigma^-|} \\
        		0  &0        & -BC^T        &  BC^T      & -BC^T    &I_{|E|-|\sigma^+|}     &-I_{|E|-|\sigma^-|} & BC^TCB \\
        	\end{array} 
        		\right]
      \end{equation}
        		\hrulefill
        		\vspace*{4pt}
      \end{figure*}  
\setcounter{equation}{\value{TempEqCnt2}}

We start with a lemma.
        
        \begin{lemma}
        	\label{lemma:decreasing.2}
        	Suppose A1, A4 and A5 hold. Then 
        	\begin{enumerate}
        		\item $\dot V_2(w(t))\leq 0 $ in a fixed $\sigma^+$ and $\sigma^-$.
        		\item The trajectory $w(t)$ is bounded, i.e., there exists $\overline w$ such
        		that $\|w(t)\|\leq \overline w$ for all $t\geq 0$.
        	\end{enumerate}  	
        \end{lemma}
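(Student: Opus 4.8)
\textbf{Proof plan for Lemma~\ref{lemma:decreasing.2}.}

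The plan is to exploit the projected‑dynamical‑systems structure of \eqref{eq:model.4} together with the fact that, on any interval where the index sets $\sigma^+,\sigma^-$ are frozen, the vector field $F(w)$ is $C^1$ and the Lyapunov candidate $\tilde V_2$ in \eqref{eq:lyapunov.2} is $C^1$. First I would establish part~1) in this frozen regime. The key algebraic fact is that, with $\sigma^+,\sigma^-$ fixed, the map $F$ has Jacobian equal to the matrix $Q$ in \eqref{eq:Q}, and by inspection $Q+Q^T\succeq 0$, i.e.\ $F$ is monotone on $S$ (the cross terms such as $\pm BC^T$, $\pm I$, $\pm CB$ cancel in the symmetrization, and the surviving diagonal blocks $D$, $A^g+I$, $A^l+I$, $BC^TCB$ are positive semidefinite). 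Using the known variational‑inequality characterization of $H(w)=\mathrm{Proj}_S(w-F(w))$ — namely $\langle y-H(w),\,H(w)-(w-F(w))\rangle\ge 0$ for all $y\in S$ — one gets the standard descent estimate for projected gradient/saddle flows (this is exactly the computation behind $V_1$ in Part~I and the references \cite{Fukushima:Equivalent,Feijer:Stability}): along $\dot w=\Gamma_2(H(w)-w)$,
\begin{align}
\dot{\tilde V}_2(w) \ \le\ -(H(w)-w)^T\big(\Gamma_2 - k\Gamma_2^{-1}\big)(H(w)-w) \ \le\ 0,\nonumber
\end{align}
where the last inequality is precisely why $k$ was chosen so small that $\Gamma_2-k\Gamma_2^{-1}\succ 0$. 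I would carry this out by differentiating the three terms of $\tilde V_2$, substituting $\dot w=\Gamma_2(H(w)-w)$, and using monotonicity of $F$ plus the projection inequality to kill the indefinite terms.

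Next I would handle the switching of $\sigma^+,\sigma^-$. The point is that $V_2$ was defined (in the main text) as the upper‑semicontinuous envelope of $\tilde V_2$, and $\tilde V_2$ is continuous except at the switching points; the claim is that these switches cannot cause $V_2$ to jump \emph{up}. Here I would argue that when an edge $(i,j)$ enters or leaves $\sigma^\pm$, the corresponding component $[\tilde\phi_{ij}-\overline\theta_{ij}]^+_{\eta^+_{ij}}$ (resp.\ the $\eta^-$ analogue) is continuous across the switch — it equals $0$ on both sides at the switching instant — so $F(w)$ is actually continuous in $w$ globally even though it is only piecewise‑$C^1$; consequently $\tilde V_2$, being built from $F$ and the continuous terms, is in fact continuous, hence $V_2=\tilde V_2$ and $V_2$ is locally Lipschitz. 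Then one invokes nonpathologicality of $V_2$ (cited via \cite{Bacciotti:Nonpathological,bacciotti:stability}) to conclude $t\mapsto V_2(w(t))$ is absolutely continuous and its derivative equals $\dot{\tilde V}_2$ wherever the latter exists, which by part~1) is $\le 0$ a.e.; hence $V_2(w(t))$ is nonincreasing on $[0,\infty)$.

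For part~2), boundedness, I would use that $V_2$ is nonincreasing, so $V_2(w(t))\le V_2(w(0))<\infty$ for all $t$ (finite by A3), together with the coercivity/radial‑unboundedness of $V_2$ in the relevant directions: the quadratic term $\tfrac12 k(w-w^*)^T\Gamma_2^{-2}(w-w^*)$ in \eqref{eq:lyapunov.2} is positive definite, and $-(H(w)-w)^TF(w)-\tfrac12\|H(w)-w\|^2\ge 0$ on $S$, so $V_2(w)\ge \tfrac12 k\,\lambda_{\min}(\Gamma_2^{-2})\,\|w-w^*\|^2$; combining gives $\|w(t)-w^*\|^2\le \tfrac{2}{k\lambda_{\min}(\Gamma_2^{-2})}V_2(w(0))=:\overline w^{\,2}$. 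The projection keeps $(P^g,P^l)$ in the compact set $X$ automatically (Lemma~\ref{lemma:bounded.2}), so no issue there, and the above sublevel‑set bound controls the remaining coordinates $(\tilde\theta,\omega,\lambda,\eta^+,\eta^-,\tilde\phi)$.

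The main obstacle I expect is making the switching argument airtight: one must verify rigorously that $F$ (equivalently each saturated component $[\,\cdot\,]^+_{\eta^\pm}$) is genuinely continuous across every $\sigma^\pm$‑switch so that $\tilde V_2$ does not actually jump, and then that the number of switches on any finite interval does not accumulate badly — or, failing that, fall back on the nonpathological‑Lyapunov machinery to bound $\dot V_2$ along the trajectory in the Clarke/set‑valued‑derivative sense. Establishing $Q+Q^T\succeq 0$ is routine linear algebra once $Q$ is written out, and the projected‑flow descent inequality is by now standard, so the delicate part is purely the nonsmooth bookkeeping at the switching surface.
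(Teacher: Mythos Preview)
Your handling of part~2 (boundedness) is essentially the paper's: bound $V_2$ below by the quadratic term $\tfrac{k}{2}(w-w^*)^T\Gamma_2^{-2}(w-w^*)$ using Fukushima's result that the gap-function part is nonnegative on $S$, then use that $V_2$ is nonincreasing. That piece is fine.

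There is a genuine error in your switching argument. You assert that $[\tilde\phi_{ij}-\overline\theta_{ij}]^+_{\eta^+_{ij}}$ ``equals $0$ on both sides at the switching instant'' so that $F$ is globally continuous. This is true when an edge \emph{leaves} $\sigma^+$ (i.e.\ $\tilde\phi_{ij}-\overline\theta_{ij}$ crosses zero while $\eta^+_{ij}=0$), but it is \emph{false} when an edge \emph{enters} $\sigma^+$: that transition occurs when $\eta^+_{ij}$ decreases to $0$ while $\tilde\phi_{ij}-\overline\theta_{ij}<0$, and immediately before the switch $(\eta^+_{ij}>0)$ the bracket equals $\tilde\phi_{ij}-\overline\theta_{ij}<0$, whereas immediately after it equals $0$. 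So $F$, and hence $\tilde V_2$, genuinely jumps there. The paper's remedy is to check the sign of the jump: the $\eta^+_{ij}$-contribution to the gap-function part of $\tilde V_2$ is $\tfrac12\gamma^\eta_{ij}(\tilde\phi_{ij}-\overline\theta_{ij})^2>0$, and it is \emph{lost} at this switch, so $\tilde V_2$ jumps \emph{down}. Together with $\dot{\tilde V}_2\le 0$ between switches, this yields $V_2$ nonincreasing along trajectories even across discontinuities. Your proposed route of declaring $\tilde V_2$ continuous and then invoking nonpathologicality does not go through as stated; you have to distinguish the two switching directions and exploit the one-sidedness of the jump.

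A smaller point on part~1: within a fixed $\sigma^\pm$, bounding $\dot{\tilde V}_2$ by pure monotonicity of $F$ is delicate because the equilibrium $w^*$ need not lie in the same $\sigma$-region as the current state $w$, so ``$F(w^*)$'' computed with the current $\sigma$ has no equilibrium meaning and the monotonicity inequality $(w-w^*)^T(F(w)-F(w^*))\ge 0$ does not directly give what you need. The paper handles the corresponding term $-k(w-w^*)^T\Gamma_2^{-1}F(w)$ by first passing from the saturated $\eta$-dynamics to the unsaturated ones via $(\eta^+-\eta^{+*})^T[\tilde\phi-\overline\theta]^+_{\eta^+}\le(\eta^+-\eta^{+*})^T(\tilde\phi-\overline\theta)$ (valid because $\eta^{+*}\ge 0$ and the dropped terms have the right sign), which lands on $\nabla\hat L_2$, and then uses the convex--concave saddle-point property of $\hat L_2$ at $(w_1^*,w_2^*)$. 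Your plan needs this extra step; ``monotonicity plus projection inequality'' alone does not close it.
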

        
\begin{proof}[Proof of Lemma \ref{lemma:decreasing.2} ] 
	Given fixed $\sigma^+$, $\sigma^-$, for all $(i,j)\notin \sigma^+$, $(i,j)\notin \sigma^-$, we have
	\begin{subequations}
		\begin{align}
		\dot V_2(w)&\le k(H(w)-w^*)^T\cdot \Gamma_2^{-1} (H(w)-(w-F(w)))	\label{eq:ProofThm.2a}	\\
		& -(H(w)-w)^T\Gamma_2\cdot Q\cdot\Gamma_2(H(w)-w)\label{eq:ProofThm.2b}\\
		& - \left( H(w) - (w - F(w)) \right)^T(\Gamma_2-k\Gamma_2^{-1}) (w-H(w))  \label{eq:ProofThm.2c}\\
		& -k(w-w^*)^T\cdot \Gamma_2^{-1} F(w)	\label{eq:ProofThm.2d}
		\end{align}
		where $Q$  is a semi-definite positive matrix and $\Gamma_2 Q=\nabla_wF(w)$, which given in \eqref{eq:Q}. Here, the subscript
		of $I$ means its dimension, and $|A|$ means the cardinality of set $A$. 	           
	\end{subequations}
	
	
	Given fixed $\sigma^+$ and $\sigma^-$, $F(w)$ is continuous differentiable. In this case, (\ref{eq:ProofThm.2a}) and (\ref{eq:ProofThm.2c}) are nonpositive due to as discussed in the proof of Theorem 4 in \cite{Wang:DistributedFrequency}. (\ref{eq:ProofThm.2b}) is also nonpositive as $Q$ is semi-definite positive (see Eq. \eqref{eq:Q}).     
	
	Next, we  prove that (\ref{eq:ProofThm.2d}) is nonpositive. 
	Similar to the per-node case, substituting $\mu(t)\equiv \omega(t)$ into the Lagrangian $L_2(x, \rho)$ in \eqref{eq:defL.21} we obtain a function $\hat{L}_2(w)$ defined as follows.  
	\begin{align}
	\hat{L}_2(w)& \quad := 
	L_2 \left(\tilde {\theta}, \omega, P^g, P^l, \lambda, \mu, \eta^+, \eta^-, \tilde{\phi} \right)_{\mu=\omega} 
	\nonumber
	\\ &\quad =  
	\frac{1}{2} \left( (P^g)^T A^g P^g + (P^l)^T A^l P^l - \omega^T D \omega+z^Tz \right )\nonumber
	\\
	& \quad + \ \lambda^T \!\! \left( P^g - P^l - p -CB\tilde{\phi}\right) + \ \omega^T \!\! \left( P^g - P^l - p - C B\tilde \theta \right)\nonumber\\
	&\quad +(\eta^-)^T\left(\underline{\theta}-\tilde\phi\right) + (\eta^+)^T \left(\tilde \phi-\overline{\theta}\right)\nonumber
	\end{align}
	In addition, denote $w_1=(\tilde\theta, P^g, P^l, \tilde{\phi})$, $w_2=(\lambda, \omega, \eta^+, \eta^-)$. Then $\hat L_2$ is convex in $w_1$ and concave in $w_2$.  
	
	In $F(w)$,  $[\tilde\phi-\overline \theta]^+_{\eta^+}$ and $[\underline \theta-\tilde\phi]^+_{\eta^-} $ have unknown dimensions (up to $\sigma^+$ and $\sigma^-$). Fortunately, we have 
	\begin{align}
	(\eta^+-\eta^{+*})^T[\tilde\phi-\overline \theta]^+_{\eta^+}&\le(\eta^+-\eta^{+*})^T(\tilde\phi-\overline \theta)\nonumber\\
	&=(\eta^+-\eta^{+*})^T\nabla_{\eta^+} \hat L_2\nonumber
	\end{align}
	where the inequality holds since $\eta_{ij}^{+}=0\le \eta_{ij}^{+*}$ and $\tilde\phi_{ij}-\overline \theta_{ij} < 0$ for $(i,j)\in \sigma^+$, i.e., $(\eta_{ij}^+-\eta_{ij}^{+*}) \cdot (\tilde{\phi}_{ij}-\overline{\theta}_{ij})\ge0$. Similarly, 
	\begin{align}
	(\eta^--\eta^{-*})^T[\underline \theta-\tilde\phi]^+_{\eta^-}&\le(\eta^--\eta^{-*})^T(\underline \theta-\tilde\phi) \nonumber\\
	& = (\eta^--\eta^{-*})^T\nabla_{\eta^-} \hat L_2.\nonumber
	\end{align}
	
	Consequently, it can be verified that 
	\bqn
	(w-w^*)^T\Gamma_2^{-1} F(w) & \!\!\!\! \le \!\!\!\! & (w-w^*)^T \nabla_{w}\hat{L}_2(w) \nonumber\\
	& \!\!\!\!= \!\!\!\! &\ 
	(w-w^*)^T\begin{bmatrix} \ \ \nabla_{w_1}\hat L_2 \\ - \nabla_{w_2} \hat L_2 \end{bmatrix}\!
	(w_1, w_2)
	\eqn
	where, $\nabla_{w_1} \hat{L}_2 := \begin{bmatrix}
	\nabla_{\tilde\theta} \hat L_2 \\
	\nabla_{P^g} \hat L_2 \\
	\nabla_{P^l} \hat L_2 \\
	\nabla_{\tilde{\phi}} \hat L_2 
	\end{bmatrix}$   and $\nabla_{w_2} \hat{L}_2 := \begin{bmatrix}
	\nabla_{\lambda} \hat L_2 \\
	\nabla_{\omega} \hat L_2\\
	\nabla_{\eta^+} \hat L_2\\
	\nabla_{\eta^-} \hat L_2
	\end{bmatrix}$.  
	
	\setcounter{equation}{2}
	Then we have
	\begin{align}
	&-k(w-w^*)^T\Gamma_2^{-1} F(w)\le \ -k(w-w^*)^T\cdot \Gamma_2^{-1} F(w) \nonumber \\
	= &\  -k(w_1-w_1^*)^T \nabla_{w_1}\hat L_2(w_1,w_2) + k(w_2-w_2^*)^T \nabla_{w_2}\hat L_2(w_1,w_2) \nonumber \\
	\le & \ k \left(\hat L_2(w_1^*,w_2) - \hat L_2(w_1,w_2) +\hat L_2(w_1,w_2) - \hat L_2(w_1,w^*_2)\right) \nonumber \\
	= &  \ k\left( \underbrace{\hat L_2(w_1^*,w_2) - \hat L_2(w^*_1,w^*_2)}_{\le 0} + \underbrace{\hat L_2(w^*_1,w^*_2) - \hat L_2(w_1,w^*_2)}_{\le 0}\right)\nonumber \\
	\le & \ 0
	\label{saddle point}	
	\end{align}
	where the first inequality holds because $\hat L_2$ is convex in $w_1$ and concave in $w_2$ and the second inequality follows
	because $(w^*_1, w^*_2)$ is a saddle point.    
	Therefore (\ref{eq:ProofThm.2d}) is nonpositive, proving the first assertion. 
	
	To prove the second assertion, we further investigate the situation that $\sigma^+$ or  $\sigma^-$ changes. 
	We only consider the set $\sigma^+$ since it is the same to  $\sigma^-$. We have the following observations:
	\begin{itemize}
		\item The set $\sigma^+$ is reduced, which only happens when $\tilde{\phi}_{ij}-\overline{\theta}_{ij}$ goes through zero, from negative to positive. Hence an extra term will be added to $V_2$. As this term is initially zero, there is no discontinuity of $V_2$ in this case.
		\item The set $\sigma^+$ is enlarged when $\eta_{ij}^+$ goes to zero from positive while $\tilde{\phi}_{ij}<\overline{\theta}_{ij}$. Here $V_2$ will lose a positive term $(\gamma_{ij}^\eta)^2(\tilde{\phi}_{ij}-\overline \theta_{ij})^2/2$, causing discontinuity.
	\end{itemize}

	In the context, we conclude that $V_2$ is always nonincreasing along the trajectory even when $\sigma^+$ or $\sigma^-$ changes and discontinuity occurs. 

	To prove that the trajectory $w(t)$ is bounded
	note that \cite[Theorem 3.1]{Fukushima:Equivalent} proves that 
	$\hat V_2(w) := - \left( H(w)-w \right)^T F(w)  \, - \, \frac{1}{2} ||H(w)-w||^2_2$ satisfies
	$\hat V_2(w) \geq 0$ over $S$. 	
	Therefore, we have 
	\bqn
	\frac{1}{2}k(w(t)-w^*)^T\Gamma_2^{-2}(w(t)-w^*) &\!\!\! \le \!\!\! & V_2(w(t)) \ \le \ V_2(w(0))
	\eqn
	indicating the trajectory $w(t)$ is bounded.  
\end{proof}

\begin{lemma}
        		\label{lemma:decreasing.3}
        		Suppose A1, A4 and A5 hold. Then 
        		\begin{enumerate}
        			\item The trajectory $w(t)$  converges to the largest weakly invariant subset $W_2^*$ contained in $W_2:=\{w\in S | \dot V_2(w)=0 \}$.
        			\item Every point $w^*\in W_2^*$ is an equilibrium point of \eqref{eq:model.4}.
        		\end{enumerate}  	
        	\end{lemma}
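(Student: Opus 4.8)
The plan is to obtain Lemma~\ref{lemma:decreasing.3} as a direct application of the LaSalle-type invariance principle for \emph{nonpathological} Lyapunov functions \cite{Bacciotti:Nonpathological,bacciotti:stability}, using exactly the ingredients already assembled in Lemma~\ref{lemma:decreasing.2}: $V_2$ is upper semicontinuous, nonpathological, and bounded below by $0$ on $S$; along any solution of \eqref{eq:model.4} the map $t\mapsto V_2(w(t))$ is nonincreasing, including across the instants at which $\sigma^+$ or $\sigma^-$ changes (by the two-bullet case analysis, the switches either preserve continuity of $V_2$ or make it jump \emph{down}); and the trajectory $w(t)$ is bounded. Since the unprojected vector field is Lipschitz for fixed $\sigma^\pm$ and $S$ is closed and convex, \cite{DupuisNagurney1993} gives existence and uniqueness of the Carath\'eodory solution, so \emph{weak} invariance and forward invariance coincide for \eqref{eq:model.4}.

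For part~1 I would argue as follows. Boundedness makes the $\omega$-limit set $\Omega$ of the trajectory nonempty, compact and invariant, with $w(t)\to\Omega$. Monotonicity plus lower boundedness of $t\mapsto V_2(w(t))$ give a limit $c:=\lim_{t\to\infty}V_2(w(t))\ge 0$; upper semicontinuity of $V_2$ then forces $V_2\equiv c$ on $\Omega$, and nonpathologicity of $V_2$ (so that $\tfrac{d}{dt}V_2(w(t))$ belongs to the set-valued derivative of $V_2$ almost everywhere) yields $\dot V_2=0$ a.e.\ along any solution lying in $\Omega$. Hence $\Omega\subseteq W_2$, and, being invariant, $\Omega\subseteq W_2^*$; this is precisely the asserted convergence to the largest weakly invariant subset of $W_2$. (Equivalently, one simply quotes the invariance principle of \cite{bacciotti:stability} with the data above.)

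For part~2 the plan is to characterize $W_2$ via the decomposition of $\dot V_2$ built in the proof of Lemma~\ref{lemma:decreasing.2}. On $W_2$ we have $\dot V_2=0$; since $\dot V_2\le\eqref{eq:ProofThm.2a}+\eqref{eq:ProofThm.2b}+\eqref{eq:ProofThm.2c}+\eqref{eq:ProofThm.2d}$ with every summand nonpositive, each of \eqref{eq:ProofThm.2a}--\eqref{eq:ProofThm.2d} vanishes on $W_2$. From \eqref{eq:ProofThm.2b}$=0$ and positive semidefiniteness of $Q$ we get that $\Gamma_2(H(w)-w)$ lies in the null space of the symmetric part of $Q$ in \eqref{eq:Q}; from \eqref{eq:ProofThm.2d}$=0$ together with the saddle-point chain of inequalities displayed in that proof we get that $w$ agrees with the reference saddle point $w^*$ in the components where $\hat L_2$ is strictly convex/concave (namely $(P^g,P^l,\omega)$, and, through the $z^2$ regularizer, $\tilde\phi$); and \eqref{eq:ProofThm.2a}$=0$, \eqref{eq:ProofThm.2c}$=0$ combined with the coordinatewise projection inequalities on $X$ (recall $H$ acts only on $(P^g,P^l)$) pin down the corresponding components of $H(w)$. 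Restricting now to the \emph{invariant} set $W_2^*$, all these identities hold for all $t$ along the solution through any $w^\ast\in W_2^*$; differentiating them and substituting back into the structure of $F$ in \eqref{eq:Fz.2} and $Q$ in \eqref{eq:Q}, and invoking connectedness of the graph exactly as in the last step of the stability proof of \cite{Wang:DistributedFrequency}, propagates the equalities to the remaining cyber components $(\tilde\theta,\lambda,\eta^+,\eta^-)$ and yields $H(w)=w$ throughout $W_2^*$. By the fixed-point characterization this means every $w^\ast\in W_2^*$ is an equilibrium of \eqref{eq:model.4}, i.e.\ $W_2^*\subseteq E_2$.

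The step I expect to be the main obstacle is precisely this last propagation in part~2: converting ``$\dot V_2=0$'' into ``$H(w)=w$'' requires using invariance of $W_2^*$ to legitimately differentiate the algebraic relations produced by the vanishing terms, then combining the kernel structure of $Q$ with graph connectivity, all while tracking the index sets $\sigma^+,\sigma^-$ (whose cardinalities enter the dimension of $Q$) and verifying that the $\limsup$/upper-semicontinuous redefinition of $V_2$ at switching points does not add spurious points to $W_2$. A secondary technical point is making the nonpathologicity hypothesis precise enough to license the chain rule underlying the invariance principle, which is where the observation that $\tilde V_2$ is continuous almost everywhere except on the measure-zero switching manifolds is used.
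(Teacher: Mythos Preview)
Your Part~1 is exactly the paper's argument: boundedness and nonincrease of $V_2$ from Lemma~\ref{lemma:decreasing.2}, nonpathologicity of $V_2$ and $\dot V_2$, and then the invariance principle of \cite[Proposition~3]{Bacciotti:Nonpathological}.

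Your Part~2 is viable but follows a different route from the paper. You plan to exploit \eqref{eq:ProofThm.2d}$=0$ \emph{pointwise}: strict convexity/concavity of $\hat L_2$ in $(P^g,P^l,\omega)$, together with the $z^Tz$ term for $\tilde\phi$, forces these components to coincide with those of the reference saddle $w^*$ on $W_2$; invariance then makes them constant along trajectories in $W_2^*$, and you propagate to the remaining variables. The paper instead first extracts information from the quadratic form \eqref{eq:ProofThm.2b}, computing it explicitly as
\[
\dot V_2 \ \le\ -\dot\omega^T D\dot\omega-(\dot P^g)^T A^g\dot P^g-(\dot P^l)^T A^l\dot P^l-\|CB\dot{\tilde\phi}-\dot P^g+\dot P^l\|^2,
\]
which directly yields $\dot P^g=\dot P^l=\dot\omega=0$ on $W_2$. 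For $\tilde\phi$ the paper does \emph{not} use strict convexity; rather, from \eqref{eq:ProofThm.2d}$=0$ it keeps only the saddle equality $\hat L_2(w_1(t),w_2^*)=\hat L_2(w_1^*,w_2^*)$ and \emph{differentiates it in $t$} along the invariant trajectory, obtaining $-\dot{\tilde\phi}^T(\Gamma^{\tilde\phi})^{-1}\dot{\tilde\phi}=0$ and hence $\dot{\tilde\phi}=0$. Your strict-convexity route for $\tilde\phi$ hinges on injectivity of $\tilde\phi\mapsto CB\tilde\phi$, which holds only on $\mathrm{range}(C^T)$; the paper's differentiation trick sidesteps this issue entirely.

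For the last propagation step, the paper's mechanism is \emph{boundedness}, not graph connectedness: once $(\omega,P^g,P^l,\tilde\phi)$ are constant on $W_2^*$, the right-hand sides of \eqref{eq:closedloop.2a}, \eqref{eq:closedloop.2e}--\eqref{eq:closedloop.2g} are constants, so $\dot{\tilde\theta},\dot\lambda,\dot\eta^\pm$ are constant; since the trajectory is bounded (Lemma~\ref{lemma:decreasing.2}), these constants must be zero. Your plan to ``invoke connectedness as in Part~I'' is misdirected here; connectedness is what makes the $\tilde\phi$ argument work, but it is boundedness that closes the loop on $(\tilde\theta,\lambda,\eta^+,\eta^-)$.
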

        
        \begin{proof} [Proof of Lemma \ref{lemma:decreasing.3} ] 
        	Given an initial point $w(0)$ there is a compact set $\Omega_0 := \Omega(w(0)) \subset S$ such that $w(t)\in\Omega_0$ 
        	for $t\geq 0$ and $\dot V_2(w) \leq 0$ in $\Omega_0$. 
        	
        	Invoking the proof of Lemma \ref{lemma:decreasing.2}, $V_2$ is radially unbounded and positively definite except at equilibrium. As $V_2$ and $\dot{V}_2$ are nonpathological,   we conclude that any trajectory $w(t)$ starting from $\Omega_0$ converges to the largest weakly invariant subset $W_2^*$ contained in $W_2=\{\ w\in \Omega_0\ |\ \dot V_2(w) =0\ \}$ \cite[Proposition 3]{Bacciotti:Nonpathological}, proving  the first assertion.
        	
        For the second assertion, We fix $w(0) \in W_2^*$ and then prove that $w(0)$ must be an equilibrium point. 
        		
       	From (\ref{eq:ProofThm.2b}), direct computing yields 
        	\begin{align}
        	\dot V_{2}(w(t)) &\le  -\dot{\omega}^TD \dot {\omega}-(\dot {P}^g)^T A^g \dot {P}^g \nonumber\\
        	&-(\dot {P}^l)^T A^l \dot {P}^l -(CB\dot{\tilde \phi}-\dot {P}^g+\dot {P}^l)^T(CB\dot{\tilde \phi}-\dot {P}^g+\dot {P}^l)\nonumber\\
        	&\le 0
        	\label{dot V2}
        	\end{align}
       Since $A^g$, $A^l$ and $D$ are positively definite diagonal matrices, $\dot{V}_2(w)=0$ holds  only when $ \dot P^g = \dot P^l = \dot \omega=0$. Therefore, for any $w(0)  \in W_2^*$, the trajectory $w(t)$ satisfies 
       \bq
       \label{eq:dotV=0}
       \dot P^g(t) \ = \ \dot P^l(t) \ = \ \dot \omega(t) \ = \ 0, \qquad t\ge 0
       \eq
Hence $P^g(t)$, $P^l(t)$ and $\omega(t)$ are all constants due to the boundedness property guaranteed by  Lemma \ref{lemma:decreasing.2}.

   On the other hand, for $\dot V_2(w)=0$, both terms in  (\ref{saddle point}) have to be zero, implying that 
        	$$\hat L_2(w_1(t),w_2^*) = \hat L_2(w^*_1,w^*_2)$$
        	must hold in $W_2$. Differentiating with respect to $t$ gives
        	\begin{align}
	        	\left(\frac{\partial}{\partial w_1}\hat L_2(w_1(t),w_2^*)\right)^T\cdot\dot w_1(t)=0        	=-\dot{\tilde{\phi}}^T(\Gamma^{\tilde \phi}) ^{-1}\dot{\tilde{\phi}}
        	\end{align}
        	The second equality holds due to Eq. \eqref{eq:dotV=0} and \eqref{eq:closedloop.2h}. Then we can conclude $\dot{\tilde{\phi}}=0$ immediately, implying $\tilde{\phi}$ is also constant in $W^*$ due to its boundedness.        	
        	
%
%
	Invoking the close-loop dynamics  \eqref{eq:closedloop.2},  $\dot{\tilde{\theta}}(t)$, $\dot{\eta}^+(t)$, $\dot{\eta}^-(t)$ and $\dot{\lambda}(t)$ must be constants in $W_2^*$ as $P^{g}(t), P^l(t), \omega(t)$ and ${\tilde{\phi}}(t)$ are all constants. Then we conclude that $\dot{\tilde{\theta}}(t)= \dot{\eta}^+(t)=\dot{\eta}^-(t)=\dot{\lambda}(t)=0$ holds for all $t\ge 0$ due to the boundedness property of $w(t)$ (Lemma \ref{lemma:decreasing.2}).
       	This implies that any $w(0)\in W_2^*$ must be an equilibrium point,  completing the proof.
        \end{proof}
   
\begin{proof}[Proof of Theorem \ref{thm:stability.22} ] 
		 Fix any initial state $w(0)$ and consider the trajectory $(w(t), t\geq 0)$ of the
		 closed-loop system close-loop dynamics  \eqref{eq:closedloop.2}.		 
		 As mentioned in the proof of Lemma \ref{lemma:decreasing.3},
		 $w(t)$ stays entirely in a compact set $\Omega_0$.   Hence there exists an infinite 
		 sequence of time instants ${t_k}$ such that $w(t_k)\to \hat {w}^*$ as $t_k\to\infty$, 
		 for some $\hat w^* \in W_2^*$. Lemma \ref{lemma:decreasing.3} guarantees that
		 $\hat w^*$ is an equilibrium point of the closed-loop system \eqref{eq:closedloop.2}, and hence $\hat w^*=H(\hat{w}^*)$.
		  Thus, using this specific equlibrium point $\hat {w}^*$ in the definition of $V_2$, we have 
		 \begin{align}
			 V_2^* =\lim\limits_{t\to \infty} V_2(w(t)) &= \lim\limits_{t_k\to \infty} V_2(w(t_k)) \nonumber\\
				 &\qquad=\lim\limits_{w(t_k) \to \hat w^*} V_2\big( w(t_k)\big) = V_2(\hat w^*) = 0 \nonumber
		 \end{align}
		 Here, the first equality uses the  fact that
		 $V_2(t)$ is nonincreasing in $t$; the second equality uses the fact that
		 $t_k$ is the infinite sequence of $t$; the third equality uses the fact that $w(t)$ is
		 absolutely continuous in $t$; the fourth equality is due to the upper semi-continuity of $V_2(w)$, and the last equality holds as $\hat {w}^*$ is an equilibrium point of $V_2$. 
		 
		 The quadratic term $(w-\hat w^*)^T\Gamma_2^{-2}(w-\hat w^*)$
		 in $V_2$ then implies that $w(t)\to \hat {w}^*$ as $t\to \infty$, which completes the proof. 
\end{proof}

\end{document}